\documentclass[journal,twoside,web,letter]{ieeecolor_noruler}
\usepackage{generic}
\usepackage{cite}
\usepackage{amsmath,amssymb,amsfonts}
\usepackage{graphicx}
\usepackage{xcolor}
\usepackage{algorithm,algorithmic}
\usepackage{hyperref}
\hypersetup{hidelinks=true}
\usepackage{textcomp}
\usepackage{wrapfig}
\usepackage{mathrsfs}

\newtheorem{corol}{Corollary}
\newtheorem{definition}{Definition}

\newtheorem{lemma}{Lemma}
\newtheorem{remark}{Remark}
\newtheorem{thm}{Theorem}

\DeclareMathOperator{\rank}{rank}

\DeclareMathOperator{\row}{row}

\def\BibTeX{{\rm B\kern-.05em{\sc i\kern-.025em b}\kern-.08em
		T\kern-.1667em\lower.7ex\hbox{E}\kern-.125emX}}
\begin{document}
	\def\ZZ{{\mathbb Z}}
	\def\RR{{\Bbb R}}
	\def\NN{{\mathbb N}}
	\def\CC{{\mathbb C}}
	
	\title{A Minimum-Order Functional Observer Beyond the Darouach and Luenberger Constructions}
	\author{ Tyrone Fernando
		\thanks{T.~Fernando is with the Department of Electrical, Electronic and Computer Engineering, University of Western Australia (UWA), 35 Stirling Highway, Crawley, WA 6009, Australia. (email:  tyrone.fernando@uwa.edu.au)}	
	}
	
	\maketitle
\begin{abstract}
	This paper develops a general minimum-order functional observer
	framework that contains the classical Darouach functional observer and
	the reduced-order Luenberger observer as special cases. The required
	functional dimension is determined from the triple $(A,C,L)$ through a
	set of functional observability indices. The minimum dimension permitted
	by the algebraic observer-existence condition is characterized, and the
	complete family of augmentations attaining this dimension is determined.
	Necessary-and-sufficient spectral feasibility conditions are then
	established for this dimension to be the minimum achievable observer
	order. The resulting existence conditions reduce exactly to the
	Darouach conditions when no augmentation is required and to the
	classical observability condition in the reduced-order Luenberger case.
	More generally, the framework admits minimum-order functional observers
	of intermediate order, which may exist even when neither classical
	observer is available.
\end{abstract}

	\begin{IEEEkeywords}
		Functional observers, Functional observability indices, minimum-order observer, Luenberger observer, Darouach observer.
	\end{IEEEkeywords}
	
\section{Introduction}
\label{sec:introduction}

Functional observers arise when only prescribed linear functions of the
state, rather than the entire state vector, need to be estimated.
Consider the continuous-time linear time-invariant system
\begin{equation}
	\dot x(t)=Ax(t)+Bu(t),
	\quad
	y(t)=Cx(t),
	\quad
	z(t)=Lx(t),
	\label{eq:1}
\end{equation}
where $x(t)\in\mathbb R^n$ is the state, $u(t)\in\mathbb R^m$ is the input,
$y(t)\in\mathbb R^p$ is the measured output, and $z(t)\in\mathbb R^r$ is the
functional to be estimated, with
$A\in\mathbb R^{n\times n}$,
$B\in\mathbb R^{n\times m}$,
$C\in\mathbb R^{p\times n}$, and
$L\in\mathbb R^{r\times n}$. When
$r\ll n$, reconstructing the complete state solely to recover $z$ may
introduce unnecessary observer dynamics, motivating the search for the
minimum observer order required to estimate $Lx(t)$.

The reduced-order Luenberger observer provides a minimum-order structure
for reconstructing state directions not directly available from the
measured output. Luenberger also introduced functional observers for
estimating prescribed linear functions without reconstructing the
entire state~\cite{Luenberger1966}, although general existence conditions were
not established. This initiated extensive research on single- and
multi-functional observer design; see
\cite{FortmannWilliamson1972}-\cite{ref9n} and the references therein.
Darouach~\cite{22new} later established necessary-and-sufficient
conditions for the existence and design of an order-$r$ functional
observer, where $r$ is the number of independent functional directions
to be estimated.
More recently, functional-observer theory has been extended
to large-scale and interconnected systems~\cite{ref14nb}-\cite{ref14nd},
networked systems~\cite{ref14n}-\cite{mont}, sampled-data
systems~\cite{ref12n} and nonlinear systems \cite{nonl1}-\cite{nonl3}. The Darouach functional observer and the
reduced-order Luenberger observer represent two important classical
cases of the minimum-order problem considered here.

These cases do not exhaust the possible functional-observer structures.
The prescribed functional matrix $L$ may fail the Darouach
observer-existence conditions, while the system may be unobservable and
hence admit no reduced-order Luenberger observer, even though the triple $(A,C,L)$
is functional observable. Failure of an order-$r$ Darouach observer does not imply that a
minimum-order functional observer does not exist, nor that all $n-p$
unmeasured state directions must be reconstructed; the minimum observer
order may lie strictly between $r$ and $n-p$. Thus, the minimum order is not restricted to either the
number of prescribed functionals or the number of unmeasured state
directions, but is determined more generally by the functional
observability structure of $(A,C,L)$, subject to spectral feasibility.

This paper develops a general minimum-order functional observer
framework containing the Darouach and reduced-order Luenberger
observers as special cases while admitting this intermediate regime.
The prescribed functional matrix is augmented as
\begin{equation}
	\mathcal L
	=
	\begin{pmatrix}
		L\\
		R
	\end{pmatrix},
	\nonumber
\end{equation}
where the required number of additional functional directions is
determined from $(A,C,L)$ through functional observability indices and
their choice is made from the complete family of admissible
augmentations.

The observer-existence conditions are separated into an algebraic
condition $(A)$ and a spectral condition $(S)$. The minimum functional
dimension satisfying condition $(A)$ is
\begin{equation}
	q_0
	=
	\sum_{j=1}^{r}\eta_j,
	\qquad
	r\leq q_0\leq n-p,
	\nonumber
\end{equation}
where $\eta_1,\ldots,\eta_r$ are the functional observability indices associated with the rows of $L$.
Since condition $(A)$ is necessary for observer existence, no
functional observer can have order smaller than $q_0$.

An augmentation attaining $q_0$ need not be unique. Hence,
failure of condition $(S)$ for one minimum condition-$(A)$ augmentation
does not preclude another augmentation of the same dimension from
satisfying both conditions. The complete family of minimum
condition-$(A)$ augmentations is therefore characterized, together with
necessary-and-sufficient conditions for the existence of a member
satisfying condition $(S)$. Consequently,
\begin{equation}
	\nu_{\min}=q_0
	\nonumber
\end{equation}
if and only if at least one member of this family satisfies condition
$(S)$.

The classical cases follow directly. When the Darouach
observer-existence conditions hold, no augmentation is required,
$\eta_j=1$ for $j=1,\ldots,r$, and
\begin{equation}
	\nu_{\min}=q_0=r,
	\nonumber
\end{equation}
with the general existence conditions reducing exactly to the
necessary-and-sufficient Darouach conditions. At the other
specialization, when the prescribed functional directions represent
all state directions not directly available from the measured output,
\begin{equation}
	\nu_{\min}=q_0=n-p,
	\nonumber
\end{equation}
and the general existence condition reduces to the classical
observability condition for the reduced-order Luenberger observer.

The principal extension is the intermediate regime
\begin{equation}
	r<q_0<n-p.
	\nonumber
\end{equation}
Here an order-$r$ Darouach observer does not exist, while reconstruction
of all $n-p$ unmeasured state directions is unnecessary. If spectral
feasibility holds at $q_0$, then
$\nu_{\min}=q_0$, yielding a minimum-order functional observer
that may exist even when the system is unobservable and neither
classical construction is available.

The main contributions are:
\begin{enumerate}
	\item The minimum functional dimension permitted by condition $(A)$
	is characterized by the functional observability indices as
	\(
	q_0=\sum_{j=1}^{r}\eta_j,
	\)
	with
	\(
	r\leq q_0\leq n-p.
	\)
	
	\item The complete family of augmentations attaining
	$q_0$ and satisfying condition $(A)$ is characterized, and
	necessary-and-sufficient conditions are established for
	\(
	\nu_{\min}=q_0.
	\)
	
	\item The Darouach and reduced-order Luenberger observer-existence
	conditions are recovered as special cases, while intermediate-order
	minimum functional observers are admitted when neither classical
	construction is applicable.
	
\end{enumerate}

The paper concludes with an example in which the Darouach conditions
fail and the system is unobservable, yet an intermediate minimum-order
functional observer exists. The example also shows why the complete
minimum condition-$(A)$ family must be considered: one minimum
augmentation fails condition $(S)$, whereas another of the same
dimension satisfies it.

\section{Preliminaries and Problem Formulation}
\label{sec:problem}

Consider the system and prescribed functional in \eqref{eq:1}, with
input $u(t)$ and measured output $y(t)$ available. Without loss of generality,
$C$ and $L$ are assumed to have full row rank: dependent measurements
provide no additional information, while dependent prescribed
functionals can be recovered from estimates of a maximal independent
subset. We further assume
\begin{equation}
	\rank
	\begin{pmatrix}
		L\\
		C
	\end{pmatrix}
	=r+p,
	\nonumber
\end{equation}
since any component of $L$ in $\row(C)$ is directly available from the
measured output and requires no observer dynamics.

Throughout, the triple $(A,C,L)$ is assumed functional observable, so that
$z(t)=Lx(t)$ is reconstructible from the available input and output. The objective is to determine a minimum-order functional observer for
estimating $z$ and to establish necessary-and-sufficient conditions for
its existence.

\subsection{Functional Observability}
\label{subsec:functional-observability}

Functional observability admits two equivalent characterizations.

\begin{definition}[Functional Observability: Reconstruction Form]
	\label{def:FO-reconstruction}
	As defined in \cite{ref3n}, $(A,C,L)$ is functionally observable if
	there exists a finite $T>0$ such that $z(0)=Lx(0)$ can be uniquely
	determined from the output $y(t)$ and known input $u(t)$ over
	$0\leq t\leq T$.
\end{definition}

An equivalent augmentation-based characterization is as follows.

\begin{definition}[Functional Observability: Augmentation Form]
	\label{def:FO-augmentation}
	As defined in \cite{ref10}, $(A,C,L)$ is functionally observable if
	there exists a finite-dimensional augmentation
	$R\in\mathbb R^{q_R\times n}$ such that
	\(
	\mathcal L=
	\begin{pmatrix}
		L\\
		R
	\end{pmatrix}
	\)
	satisfies the algebraic condition
	\begin{equation}
		\rank
		\begin{pmatrix}
			\mathcal LA\\
			\mathcal L\\
			CA\\
			C
		\end{pmatrix}
		=
		\rank
		\begin{pmatrix}
			\mathcal L\\
			CA\\
			C
		\end{pmatrix},
		\tag{A}
		\label{eq:conditionA}
	\end{equation}
	and the spectral condition
	\begin{equation}
		\rank
		\begin{pmatrix}
			\lambda\mathcal L-\mathcal LA\\
			CA\\
			C
		\end{pmatrix}
		=
		\rank
		\begin{pmatrix}
			\mathcal L\\
			CA\\
			C
		\end{pmatrix},
		\quad
		\forall \lambda\in\mathbb C.
		\tag{S}
		\label{eq:conditionS}
	\end{equation}
\end{definition}

\begin{remark}
	Condition $(S)$, as stated for all $\lambda\in\mathbb C$, corresponds to
	arbitrary observer-pole assignment. If only asymptotic convergence
	is required, it suffices that
	\(
	\rank
	\begin{pmatrix}
		\lambda\mathcal L-\mathcal LA\\
		CA\\
		C
	\end{pmatrix}
	=
	\rank
	\begin{pmatrix}
		\mathcal L\\
		CA\\
		C
	\end{pmatrix},
	\qquad
	\Re(\lambda)\geq0,
	\)
	so that the observer poles can be assigned to the open left-half
	plane.
\end{remark}

As established in \cite{ref3n,ref4n},
Definitions~\ref{def:FO-reconstruction} and
\ref{def:FO-augmentation} are equivalent and yield
\begin{equation}
	\row(\mathcal O_L)\subseteq\row(\mathcal O_C),
	\,\,
	\mathcal O_C=
	\begin{pmatrix}
		C\\ CA\\ \vdots\\ CA^{n-1}
	\end{pmatrix},
	\,\,
	\mathcal O_L=
	\begin{pmatrix}
		L\\ LA\\ \vdots\\ LA^{n-1}
	\end{pmatrix}.
	\nonumber
\end{equation}
Equivalently,
\begin{equation}
	\rank
	\begin{pmatrix}
		\mathcal O_C\\
		\mathcal O_L
	\end{pmatrix}
	=
	\rank(\mathcal O_C),
	\label{eq:functional-observability-rank}
\end{equation}
where $\row(M)$ denotes the row space of a matrix $M$.
For $L=I_n$, this reduces to $\rank(\mathcal O_C)=n$, recovering
classical state observability. Also, condition $(A)$ is equivalently
\begin{equation}
	\row(\mathcal LA)\subseteq\row(\mathcal L,C,CA).
	\nonumber
\end{equation}

Since the observer order equals the number of independent functional
directions represented by $\mathcal L$, the minimum-order problem is 
\begin{equation}
	\nu_{\min}
	:=
	\min_{\mathcal L}\rank\mathcal L,
	\qquad
	\row(L)\subseteq\row(\mathcal L),
	\nonumber
\end{equation}
subject to conditions \eqref{eq:conditionA} and
\eqref{eq:conditionS}. The analysis proceeds in three stages:
\begin{enumerate}
	\item[I] determine the smallest functional dimension permitted by
	$(A)$ and construct an augmentation attaining it;
	\item[II] characterize all minimum-dimensional augmentations
	satisfying $(A)$;
	\item[III] test $(S)$ over this complete family.
\end{enumerate}
This separation is necessary because the minimum dimension imposed by
$(A)$ is unique, whereas the augmentation attaining it need not be.

	\section{Ordered Functional-Observability Construction}
	\label{sec:ordered-FO-construction}
	
	The functional-observability condition \cite{ref10n},
	\begin{equation}
		\row(\mathcal O_L)
		\subseteq
		\row(\mathcal O_C) \nonumber 
	\end{equation}
	establishes that the functional information generated by the rows of
	$L$ is contained in the information generated by the measured output.
	For observer construction, however, it is necessary to identify how
	this information is generated and which additional functional
	directions may be used to form an augmented functional matrix
	satisfying conditions (A) and (S).
	
	To this end, we introduce an ordered rank-retention construction based
	jointly on the rows of $C$ and $L$ and their successive $A$-iterates.
	The construction produces a finite set of independent directions from
	which the functional augmentations considered later will be formed.

	\subsection{Global examination order and rank-retention rule}
	\label{subsec:global-rank-retention}
	
	Write
	\begin{equation}
		C=(C_1;\ldots;C_p),
		\quad
		L=(L_1;\ldots;L_r), \nonumber 
	\end{equation}
	and define the $p+r$ {\it generating rows} by
	\begin{equation}
		g_1,\ldots,g_{p+r}
		:=
		C_1,\ldots,C_p,L_1,\ldots,L_r.
	\label{eq:generating-rows}
	\end{equation}
	For each generator $g_i$, consider the $A$-generated sequence
	$$
	g_i,\quad g_iA,\quad g_iA^2,\quad\ldots.
	$$
	The rows
	$$
	g_iA^q,
	\quad
	i=1,\ldots,p+r,
	\quad
	q=0,1,2,\ldots,
	$$
	are referred to as the \emph{candidate rows}.
	
	The candidate rows are examined in the following \emph{global order}:
	by increasing powers of $A$ and, within each power, in the fixed
	generator order
	\eqref{eq:generating-rows}:
	\begin{equation}
		g_1,\ldots,g_{p+r};\;
		g_1A,\ldots,g_{p+r}A;\;
		g_1A^2,\ldots,g_{p+r}A^2;\;
		\cdots. 
		\label{eq:compact-global-order}
	\end{equation}
	
	The \emph{global rank-retention procedure} is defined as follows. Let $v_k$ denote the $k^{\mathrm{th}}$ candidate row in this global order,
	and let $\mathcal M_{k-1}$ denote the matrix of rows retained after
	examining the first $k-1$ candidate rows, with $\mathcal M_0$ empty.
	The candidate $v_k$ is retained, that is, appended as a new row of
	$\mathcal M_{k-1}$, if and only if it increases the rank:
	\begin{equation}
			v_k\text{ is retained}
			\iff
			\operatorname{rank}
			\begin{pmatrix}
				\mathcal M_{k-1}\\
				v_k
			\end{pmatrix}
			>
			\operatorname{rank}(\mathcal M_{k-1}). \nonumber 
	\end{equation}
	Equivalently,
	$$
	\mathcal M_k
	=
	\begin{cases}
		\begin{pmatrix}
			\mathcal M_{k-1}\\
			v_k
		\end{pmatrix},
		&
		\text{if $v_k$ is retained},\\[4mm]
		\mathcal M_{k-1},
		&
		\text{otherwise}.
	\end{cases}
	$$
	Thus $\mathcal M_k$ contains exactly the candidate rows retained after
	the first $k$ examinations. Since a candidate row is retained only
	when it increases the rank, the rows of every $\mathcal M_k$ are
	linearly independent.
	
	Although the candidate sequence in the global order 
	\eqref{eq:compact-global-order} is written indefinitely, only finitely
	many powers need to be examined for each generator $g_i$. By the
	Cayley--Hamilton theorem,
	$$
	A^n
	\in
	\operatorname{span}\{I,A,\ldots,A^{n-1}\},
	$$
	and hence, for every generator $g_i$,
	$$
	g_iA^n
	\in
	\operatorname{span}
	\{g_i,g_iA,\ldots,g_iA^{n-1}\}.
	$$
	Therefore every $A$-generated sequence encounters a dependent row no
	later than power $n$. By Cayley--Hamilton, only powers up to $A^{n-1}$ need be considered.
	Moreover, by the standard observability-index construction, dependence
	persists along each $A$-generated sequence: if $g_iA^q$ fails to
	increase the rank when examined, then so does $g_iA^{q+\ell}$ for every
	$\ell\geq1$.

	\begin{lemma}[Persistence of dependence]
		\label{lem:persistence-of-dependence}
		In the global rank-selection procedure
		\eqref{eq:compact-global-order}, if \(g_iA^q\) is dependent on the
		previously retained rows when it is examined, then every subsequent
		row
		\(
		g_iA^{q+\ell},
		\,\, \ell\geq1,
		\)
		is also dependent on the previously retained rows when it is
		examined.
	\end{lemma}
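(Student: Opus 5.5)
Suffices to treat $\ell=1$ and induct, since the same argument then applies to $g_iA^{q+1}$, $g_iA^{q+2}$, and so on. Denote by $\mathcal S(v)$ the span of the rows retained before candidate $v$ is examined. The global order \eqref{eq:compact-global-order} is monotone: $\mathcal S(v)\subseteq\mathcal S(w)$ whenever $v$ precedes $w$. Moreover, because a row is retained exactly when it increases the rank, $\mathcal S(v)$ equals the span of all candidate rows preceding $v$, retained or not. This is the key reformulation. It replaces "the previously retained rows" by the simpler object
\[
\mathcal P(g_jA^s)=\operatorname{span}\{g_kA^t:\ t<s\}\cup\{g_kA^s:\ k<j\}.
\]

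Now suppose $g_iA^q\in\mathcal P(g_iA^q)$. Then we can write
\[
g_iA^q=\sum_{t<q}\sum_{k}\alpha_{k,t}\,g_kA^t+\sum_{k<i}\beta_k\,g_kA^q .
\]
Right-multiplying by $A$ gives
\[
g_iA^{q+1}=\sum_{t<q}\sum_k\alpha_{k,t}\,g_kA^{t+1}+\sum_{k<i}\beta_k\,g_kA^{q+1}.
\]
Every term on the right lies in $\mathcal P(g_iA^{q+1})$:
\begin{itemize}
\item terms $g_kA^{t+1}$ with $t+1\le q<q+1$ come from a strictly lower power;
\item terms $g_kA^{q+1}$ with $k<i$ come earlier at the same power.
\end{itemize}
Hence $g_iA^{q+1}\in\mathcal P(g_iA^{q+1})=\mathcal S(g_iA^{q+1})$, so it is dependent on the previously retained rows when examined. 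Induction on $\ell$ completes the proof.

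The main care point is justifying $\mathcal S(v)=\mathcal P(v)$. I would do this by induction along the global order. If a candidate is not retained, it already lies in the span of the retained rows, so omitting it changes no span. If it is retained, both spans grow by exactly that row.

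The shift-invariance of the order is also essential. Multiplying by $A$ maps "earlier than $g_iA^q$" to "earlier than $g_iA^{q+1}$" because the order is lexicographic in (power, generator index). This is the property that ensures each shifted term $g_kA^{t+1}$ and $g_kA^{q+1}$ still precedes $g_iA^{q+1}$.
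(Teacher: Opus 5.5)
Your proposal is correct and follows essentially the same argument as the paper. You write the dependent row $g_iA^q$ in terms of earlier candidates, multiply by $A$, and use the (power, generator-index) ordering to see that every shifted term is examined before $g_iA^{q+1}$. Your identity $\mathcal S(v)=\mathcal P(v)$ is the paper's observation that every examined row, retained or not, lies in all subsequent retained-row spaces, stated as an equality and proved by induction along the order.
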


	\begin{proof}
		For each step $k$ in the global order, let
		\(
		S_k:=\row(\mathcal M_k)
		\)
		denote the row space retained after the first $k$
		examinations. Since retained rows are only added and never removed,
		\(
		S_0\subseteq S_1\subseteq S_2\subseteq\cdots.
		\)
		
		We will use the following observation: every candidate row already
		examined belongs to each subsequent retained-row space. Indeed, if
		$v_j$ is retained, then $v_j\in S_j$; if it is not retained, then
		$v_j\in S_{j-1}$. Since the spaces $S_k$ are nested, in either case
		\[
		v_j\in S_k,\qquad k\geq j.
		\]
		We first show that dependence of $g_iA^q$ implies dependence of the
		next row $g_iA^{q+1}$.
		
		Suppose $g_iA^q$ is examined at step $k$ and is dependent. Then
		\[
		g_iA^q\in S_{k-1}.
		\]
		Hence it can be written as a linear combination of rows retained
		before step $k$:
		\[
		g_iA^q
		=
		\sum_\nu \alpha_\nu g_{j_\nu}A^{s_\nu}.
		\]
		Because each row on the right was examined before $g_iA^q$, the
		global order implies that, for every $\nu$, either
	$
		s_\nu<q,
	$
		or
	$
		s_\nu=q
		\quad\text{and}\quad
		j_\nu<i.
	$
		Multiplying the dependence relation by $A$ gives
		\[
		g_iA^{q+1}
		=
		\sum_\nu \alpha_\nu g_{j_\nu}A^{s_\nu+1}.
		\]
		Now let $k'>k$ be the step at which $g_iA^{q+1}$ is examined.
		We claim that every row on the right-hand side has already been
		examined before step $k'$. Indeed, if $s_\nu<q$, then
		\[
		s_\nu+1\leq q<q+1,
		\]
		so $g_{j_\nu}A^{s_\nu+1}$ occurs in an earlier power block than
		$g_iA^{q+1}$. If $s_\nu=q$, then $j_\nu<i$, so
		$g_{j_\nu}A^{q+1}$ occurs earlier than $g_iA^{q+1}$ within the
		same power-$(q+1)$ block.
		
		Thus every row on the right-hand side has been examined before
		$g_iA^{q+1}$. By the observation above, each such row therefore
		belongs to $S_{k'-1}$. Consequently,
		\[
		g_iA^{q+1}\in S_{k'-1}.
		\]
		Hence $g_iA^{q+1}$ is dependent on the rows retained before its
		own examination.
		
		The same argument can now be applied again to $g_iA^{q+1}$, then
		to $g_iA^{q+2}$, and so on. Therefore, for every $\ell\geq1$,
		\[
		g_iA^{q+\ell}
		\]
		is dependent on the previously retained rows when it is examined.
	\end{proof}

Consequently, once the first dependent row in an $A$-generated
sequence is encountered, no higher power from that sequence can
contribute a new independent direction. The sequence may therefore be
terminated at that point. Since each generator encounters a dependent
row no later than power $n$, the ordered rank-retention procedure
terminates after finitely many examinations.

Let $\mathcal M$ denote the final matrix formed by the rows retained
by this procedure, listed in the order in which they are retained.
Thus, $\mathcal M$ contains exactly the independent candidate rows
selected by the ordered rank-retention rule. By construction, its rows
are linearly independent, and therefore
$$
\operatorname{rank}(\mathcal M)
=
\text{number of rows of }\mathcal M
\leq n.
$$

\subsection{Functional-observability indices}
\label{subsec:functional-observability-indices}

The functional-observability indices are read directly from the
retained matrix $\mathcal M$. Since
$$
\operatorname{rank}
\begin{pmatrix}
	C\\
	L
\end{pmatrix}
=p+r,
$$
all generating rows $g_1,\ldots,g_{p+r}$ are retained. Hence, for
each generator $g_i$, define
\begin{equation}
		\alpha_i
		=
		1+
		\max\left\{
		k\geq0:
		g_iA^k
		\text{ is a row of }\mathcal M
		\right\},
		\,\,
		i=1,\ldots,p+r. \nonumber 
\end{equation}

Thus the highest retained power associated with $g_i$ is
$$
g_iA^{\alpha_i-1},
$$
whereas
$$
g_iA^{\alpha_i}
$$
is the first dependent row of the corresponding $A$-generated
sequence. By Lemma~\ref{lem:persistence-of-dependence}, all subsequent
powers generated by $g_i$ are also dependent.

The integers
\begin{equation}
		\alpha_1,\ldots,\alpha_{p+r} \nonumber 
\end{equation}
are called the {\it ordered functional-observability indices}.

No regrouping of the rows of $\mathcal M$ is required to determine
these indices. For each generator, $\alpha_i$ is obtained by
identifying its highest retained power in $\mathcal M$ and adding one.
The rows themselves remain in the global order induced by
\eqref{eq:compact-global-order}.

For subsequent use, denote the functional-observability indices
associated specifically with the rows of $L$ by
\begin{equation}
		\eta_j:=\alpha_{p+j},
		\quad
		j=1,\ldots,r. \nonumber 
\end{equation}
Thus $\eta_j$ is the functional-observability index associated with
the generator
$$
g_{p+j}=L_j.
$$

\subsection{First-dependence relations}
\label{subsec:first-dependence-relations}

For each $i=1,\ldots,p+r$, the functional-observability index
$\alpha_i$ identifies the first dependent row of the generator $g_i$,
namely
$$
g_iA^{\alpha_i}.
$$
Let $\mathcal M_i^-$ denote the matrix formed by all rows retained
strictly before $g_iA^{\alpha_i}$ is reached in the prescribed global
order. By the rank-retention rule,
\begin{equation}
	g_iA^{\alpha_i}
	\in
	\operatorname{row}(\mathcal M_i^-). \nonumber 
\end{equation}
Since the rows of $\mathcal M_i^-$ are linearly independent, there
exists a unique coefficient row vector $\phi_i$ such that
\begin{equation}
		g_iA^{\alpha_i}
		=
		\phi_i\mathcal M_i^-,
		\qquad
		i=1,\ldots,p+r. 
	\label{eq:first-dependence-representation}
\end{equation}
The predecessor matrix $\mathcal M_i^-$ may contain retained rows
generated by both $C$ and $L$. Every row of $\mathcal M_i^-$ has the
form
$$
g_jA^q
$$
for some $j=1,\ldots,p+r$ and some nonnegative integer $q$.

The prescribed global examination order implies that such a row can
precede $g_iA^{\alpha_i}$ only if
$$
q<\alpha_i,
$$
or if
$$
q=\alpha_i
\quad\text{and}\quad
j<i.
$$
Thus, at powers strictly below $\alpha_i$, retained rows from any
generator may occur in $\mathcal M_i^-$, whereas at the terminal
power $\alpha_i$, only retained rows associated with the preceding
generators $g_1,\ldots,g_{i-1}$ can occur.

Equation \eqref{eq:first-dependence-representation} therefore
expresses each first dependent row directly in terms of the rows
actually retained before it, while preserving the ordering induced
by the global rank-retention procedure.

The collection of these dependence relations preserves the joint
$C$- and $L$-generated structure of the ordered construction. This
structure will now be used to construct augmented functional matrices
satisfying condition (A).

\section{Construction of Admissible Functional Augmentations}
\label{sec:augmentation-construction}

The ordered rank-retention construction provides a finite set of
independent directions generated jointly by the rows of $C$ and $L$
and their successive $A$-iterates. We now use the associated
first-dependence relations to construct augmented functional matrices
\begin{equation}
	\mathcal L
	=
	\begin{pmatrix}
		L\\
		R
	\end{pmatrix} \nonumber 
\end{equation}
that satisfy the observer-existence conditions (A) and (S).

Condition (A) requires algebraic closure under \(A\), modulo the measured directions supplied by \(C\) and \(CA\), whereas condition (S) imposes the corresponding spectral rank requirement. We first construct an augmentation satisfying (A).

\subsection{Augmentation required by condition (A)}
\label{subsec:augmentation-condition-A}

Condition $(A)$ is
\begin{equation}
	\rank
	\begin{pmatrix}
		\mathcal LA\\
		\mathcal L\\
		CA\\
		C
	\end{pmatrix}
	=
	\rank
	\begin{pmatrix}
		\mathcal L\\
		CA\\
		C
	\end{pmatrix}. \nonumber 
\end{equation}
Equivalently,
\begin{equation}
		\row(\mathcal LA)
		\subseteq
		\row
		\begin{pmatrix}
			\mathcal L\\
			CA\\
			C
		\end{pmatrix}. \nonumber 
\end{equation}

Thus condition $(A)$ requires the $A$-image of every row of
$\mathcal L$ to belong to the row space generated by $\mathcal L$, $C$,
and $CA$.

For the unaugmented choice $\mathcal L=L$, condition $(A)$ holds if and
only if
\begin{equation}
	\row(LA)
	\subseteq
	\row
	\begin{pmatrix}
		L\\
		CA\\
		C
	\end{pmatrix}. \nonumber 
\end{equation}
If this inclusion fails, additional functional directions are required.
The first-dependence relations obtained in
Section~\ref{sec:ordered-FO-construction} provide the information needed
to construct these directions.

\subsection{Construction of the augmentation for condition (A)}
\label{subsec:augmentation-construction-A}

For the functional generator $g_{p+j}=L_j, j\in\{1,\dots,r\}$, let

$$
\eta_j=\alpha_{p+j}.
$$
Then
$$
L_j,\;
L_jA,\;
\ldots,\;
L_jA^{\eta_j-1}
$$
are retained, whereas $L_jA^{\eta_j}$ is the first dependent row.
Hence, by \eqref{eq:first-dependence-representation},
\begin{equation}
		L_jA^{\eta_j}
		=
		\phi_{p+j}\mathcal M_{p+j}^{-},
		\qquad
		j=1,\ldots,r. 	\label{eq:Lj-first-dependence}
\end{equation}
We refer to \eqref{eq:Lj-first-dependence} as the {\it first-dependence
relation} associated with $L_j$.

Partition $\mathcal M_{p+j}^{-}$ according to the powers of $A$ in
its retained rows:
\begin{equation}
		\mathcal M_{p+j}^{-}
		=
		\begin{pmatrix}
			\mathcal G_{p+j}^{(0)}A^0\\
			\mathcal G_{p+j}^{(1)}A^1\\
			\vdots\\
			\mathcal G_{p+j}^{(\eta_j)}A^{\eta_j}
		\end{pmatrix}. \nonumber 
\end{equation}
Here $\mathcal G_{p+j}^{(q)}$ consists of those generators $g_i$,
$i=1,\ldots,p+r$, for which the candidate row $g_iA^q$ was retained
before $L_jA^{\eta_j}$  (i.e., $g_{p+j}A^{\eta_j}$) was examined. If no such
candidate row exists for a given $q$, then
$\mathcal G_{p+j}^{(q)}$ is empty.

Partition the coefficient row $\phi_{p+j}$ conformably as

$$
\phi_{p+j}
=
\begin{pmatrix}
	\phi_{p+j}^{(0)}&
	\phi_{p+j}^{(1)}&
	\cdots&
	\phi_{p+j}^{(\eta_j)}
\end{pmatrix}.
$$
With this partition, the first-dependence relation
\eqref{eq:Lj-first-dependence} becomes
\begin{equation}
		L_jA^{\eta_j}
		-
		\sum_{q=0}^{\eta_j}
		\phi_{p+j}^{(q)}
		\mathcal G_{p+j}^{(q)}A^q
		=
		{\bf 0}. 	\label{eq:Lj-partitioned-dependence}
\end{equation}
If $\mathcal G_{p+j}^{(q)}$ is empty for some $q$, the corresponding
term in \eqref{eq:Lj-partitioned-dependence} is omitted.
Writing \eqref{eq:Lj-partitioned-dependence} explicitly in descending
powers of $A$ gives
\begin{IEEEeqnarray}{rcl}
	&&L_jA^{\eta_j}
	-
	\phi_{p+j}^{(\eta_j)}
	\mathcal G_{p+j}^{(\eta_j)}A^{\eta_j}
	-
	\phi_{p+j}^{(\eta_j-1)}
	\mathcal G_{p+j}^{(\eta_j-1)}A^{\eta_j-1}
	-\cdots \nonumber \\
	&&-
	\phi_{p+j}^{(2)}
	\mathcal G_{p+j}^{(2)}A^2
	-
	\phi_{p+j}^{(1)}
	\mathcal G_{p+j}^{(1)}A
	-
	\phi_{p+j}^{(0)}
	\mathcal G_{p+j}^{(0)} = {\bf 0}. \nonumber 
\end{IEEEeqnarray}
The first augmented functional row, $\ell_{j,\eta_j-1}$, is constructed
from the terms on the left-hand side of
\eqref{eq:Lj-partitioned-dependence} corresponding to
$q=2,\ldots,\eta_j$. Specifically, omit the terms corresponding to
$q=0,1$ and decrease each remaining power of $A$ by one. This gives
\begin{IEEEeqnarray}{rcl}
\ell_{j,\eta_j-1}
&=&
L_jA^{\eta_j-1}
-
\phi_{p+j}^{(\eta_j)}
\mathcal G_{p+j}^{(\eta_j)}A^{\eta_j-1}
-
\phi_{p+j}^{(\eta_j-1)}
\mathcal G_{p+j}^{(\eta_j-1)}A^{\eta_j-2} \nonumber \\
&&-\cdots-
\phi_{p+j}^{(3)}
\mathcal G_{p+j}^{(3)}A^2
-
\phi_{p+j}^{(2)}
\mathcal G_{p+j}^{(2)}A. \nonumber
\end{IEEEeqnarray}
The remaining augmented functional rows are obtained successively from
the preceding row. To obtain the next row, decrease each positive power
of $A$ by one and discard the last term. Thus
\begin{IEEEeqnarray}{rcl}
\ell_{j,\eta_j-2}
&=&
L_jA^{\eta_j-2}
-
\phi_{p+j}^{(\eta_j)}
\mathcal G_{p+j}^{(\eta_j)}A^{\eta_j-2}
-
\phi_{p+j}^{(\eta_j-1)}
\mathcal G_{p+j}^{(\eta_j-1)}A^{\eta_j-3} \nonumber \\
&&-\cdots-
\phi_{p+j}^{(4)}
\mathcal G_{p+j}^{(4)}A^2
-
\phi_{p+j}^{(3)}
\mathcal G_{p+j}^{(3)}A. \nonumber
\end{IEEEeqnarray}
Continuing in this way gives
$$
\ell_{j,2}
=
L_jA^2
-
\phi_{p+j}^{(\eta_j)}
\mathcal G_{p+j}^{(\eta_j)}A^2
-
\phi_{p+j}^{(\eta_j-1)}
\mathcal G_{p+j}^{(\eta_j-1)}A,
$$
$$
\ell_{j,1}
=
L_jA
-
\phi_{p+j}^{(\eta_j)}
\mathcal G_{p+j}^{(\eta_j)}A,
$$
and finally
$$
\ell_{j,0}=L_j.
$$

This construction can be written compactly as
\begin{equation}
		\ell_{j,t}
		=
		L_jA^t
		-
		\sum_{q=\eta_j-t+1}^{\eta_j}
		\phi_{p+j}^{(q)}
		\mathcal G_{p+j}^{(q)}
		A^{q-\eta_j+t}, 	\label{eq:functional-row-construction}
\end{equation}
for $t=1,\ldots,\eta_j-1$, with
$$
\ell_{j,0}=L_j.
$$
These functional rows satisfy a first-order recursion. For
$t=0,\ldots,\eta_j-2$,
\begin{equation}
	\ell_{j,t}A
	=
	\ell_{j,t+1}
	+
	\phi_{p+j}^{(\eta_j-t)}
	\mathcal G_{p+j}^{(\eta_j-t)}A.
	\label{eq:functional-chain-recursion}
\end{equation}

For the terminal row, first suppose that $\eta_j>1$. Setting
$t=\eta_j-1$ in \eqref{eq:functional-row-construction}, multiplying by
$A$, and using \eqref{eq:Lj-partitioned-dependence} gives
\begin{equation}
	\ell_{j,\eta_j-1}A
	=
	\phi_{p+j}^{(0)}
	\mathcal G_{p+j}^{(0)}
	+
	\phi_{p+j}^{(1)}
	\mathcal G_{p+j}^{(1)}A.
	\label{eq:functional-chain-terminal}
\end{equation}
When $\eta_j=1$, no augmented row is required and
$\ell_{j,0}=L_j$. In this case,
\eqref{eq:functional-chain-terminal} follows directly from
\eqref{eq:Lj-partitioned-dependence}. Hence
\eqref{eq:functional-chain-terminal} holds for every $\eta_j\geq1$.

For each functional generator $L_j$, define the corresponding
augmentation block by
\begin{equation}
	R_j
	=
	\begin{pmatrix}
		\ell_{j,\eta_j-1}\\
		\vdots\\
		\ell_{j,1}
	\end{pmatrix},
	\qquad
	j=1,\ldots,r.
	\nonumber
\end{equation}
Thus $R_j$ contains the $\eta_j-1$ additional functional rows
constructed from the retained $A$-generated sequence of $L_j$.
If $\eta_j=1$, no additional row is required and $R_j$ is empty.

Stacking these augmentation blocks gives
\begin{equation}
	R
	=
	\begin{pmatrix}
		R_1\\
		\vdots\\
		R_r
	\end{pmatrix}.
	\label{eq:augmentation-matrix-R}
\end{equation}
The augmented functional matrix is therefore
\begin{equation}
	\mathcal L
	=
	\begin{pmatrix}
		L\\
		R
	\end{pmatrix}.
	\label{eq:augmented-functional-matrix}
\end{equation}

Thus the ordered rank-retention procedure provides the augmentation
matrix $R$ explicitly. The following result proves that the resulting
augmented functional matrix $\mathcal L$ satisfies condition $(A)$,
while the subsequent result proves that the number of additional rows
in $R$ is minimum.

\subsection{Verification of condition (A)}
\label{subsec:verification-condition-A}

The preceding construction converts each higher-order
first-dependence relation into a sequence of first-order relations.
The following result establishes that the resulting augmented
functional matrix satisfies condition $(A)$.

\begin{thm}
	\label{prop:augmentation-satisfies-A}
	Let $\mathcal L$ be the augmented functional matrix constructed in
	\eqref{eq:augmented-functional-matrix}, with $R$ defined by
	\eqref{eq:augmentation-matrix-R}. Then
	\begin{equation}
			\row(\mathcal LA)
			\subseteq
			\row
			\begin{pmatrix}
				\mathcal L\\
				CA\\
				C
			\end{pmatrix}. \nonumber 
	\end{equation}
	Hence $\mathcal L$ satisfies condition {\rm (A)}.
\end{thm}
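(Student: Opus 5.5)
The plan is to verify condition $(A)$ row by row. The rows of $\mathcal L$ are exactly the functional rows $\ell_{j,t}$, $j=1,\ldots,r$, $t=0,\ldots,\eta_j-1$, with $\ell_{j,0}=L_j$. Set
\[
\mathcal W:=\row\begin{pmatrix}\mathcal L\\ CA\\ C\end{pmatrix}.
\]
It then suffices to show $\ell_{j,t}A\in\mathcal W$ for every such pair $(j,t)$. The recursion \eqref{eq:functional-chain-recursion} and the terminal relation \eqref{eq:functional-chain-terminal} reduce this to showing that certain rows of the form $g_iA^0$ and $g_iA$ lie in $\mathcal W$.

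\textbf{Step 1: an auxiliary claim.} Every generator $g_i$ with $\alpha_i\geq 2$ satisfies $g_iA\in\mathcal W$. I will prove this by induction on the generator index $i$.
\begin{itemize}
\item If $i\leq p$, then $g_i=C_i$ and $g_iA\in\row(CA)\subseteq\mathcal W$ trivially.
\item If $i=p+k$, then $g_i=L_k$ and $\eta_k=\alpha_i\geq 2$, so $\ell_{k,1}$ is a row of $\mathcal L$. By \eqref{eq:functional-row-construction} with $t=1$,
\[
L_kA=\ell_{k,1}+\phi_{p+k}^{(\eta_k)}\mathcal G_{p+k}^{(\eta_k)}A .
\]
By the global order, the generators in $\mathcal G_{p+k}^{(\eta_k)}$ all have index $<p+k$. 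Each such $g_{i'}$ had $g_{i'}A^{\eta_k}$ retained. By Lemma~\ref{lem:persistence-of-dependence}, a retained power means all lower powers were retained, so $\alpha_{i'}>\eta_k\geq 2$. The induction hypothesis then gives $g_{i'}A\in\mathcal W$, hence $L_kA\in\mathcal W$.
\end{itemize}

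\textbf{Step 2: the non-terminal rows.} Take $t\leq\eta_j-2$. By \eqref{eq:functional-chain-recursion},
\[
\ell_{j,t}A=\ell_{j,t+1}+\phi_{p+j}^{(q)}\mathcal G_{p+j}^{(q)}A,\qquad q=\eta_j-t\geq 2 .
\]
Here $\ell_{j,t+1}$ is a row of $\mathcal L$. Every generator in $\mathcal G_{p+j}^{(q)}$ has $g_iA^q$ retained, so by the same persistence argument $\alpha_i>q\geq2$. Step 1 then places each $g_iA$ in $\mathcal W$. Note that these generators may now have arbitrary index, including indices larger than $p+j$. This is harmless, because Step 1 was established for all indices independently of $j$.

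\textbf{Step 3: the terminal row.} Take $t=\eta_j-1$ (this covers $\eta_j=1$ too). Relation \eqref{eq:functional-chain-terminal} gives
\[
\ell_{j,\eta_j-1}A=\phi_{p+j}^{(0)}\mathcal G_{p+j}^{(0)}+\phi_{p+j}^{(1)}\mathcal G_{p+j}^{(1)}A .
\]
\begin{itemize}
\item The rows of $\mathcal G_{p+j}^{(0)}$ are generators, i.e.\ rows of $C$ or $L$, and both lie in $\mathcal W$.
\item Generators in $\mathcal G_{p+j}^{(1)}$ have $g_iA$ retained, so $\alpha_i\geq2$, and Step 1 applies.
\end{itemize}
Together, Steps 2 and 3 give $\row(\mathcal LA)\subseteq\mathcal W$, which is condition $(A)$.

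The main obstacle is Step 1. A naive attempt to show $L_kA\in\mathcal W$ looks circular: it needs $g_{i'}A\in\mathcal W$ for other generators $g_{i'}$, which may themselves be $L$-rows. The circularity is broken by the strict ordering at the terminal power. In $\mathcal G_{p+k}^{(\eta_k)}$, only generators preceding $g_{p+k}$ in the fixed order appear, which is what makes the induction on the index well-founded. The remaining care is the bookkeeping that a generator appearing in $\mathcal G^{(q)}$ with $q\geq1$ indeed has $\alpha_i\geq2$; this follows from Lemma~\ref{lem:persistence-of-dependence}.
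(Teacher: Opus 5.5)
Your proof is correct and follows essentially the paper's argument: an induction along the generator order, which uses the fact that only preceding generators occur in $\mathcal G_{p+k}^{(\eta_k)}$ to place the $A$-images of the generators in the target row space, followed by a row-by-row check via the chain recursion and the terminal relation. The only cosmetic difference is that the paper proves $L_jA$ lies in that row space for every $j$, handling $\eta_j=1$ inside the induction via the terminal relation, whereas you restrict the induction to generators with $\alpha_i\geq 2$ and defer the $\eta_j=1$ rows to your terminal-row step.
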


\begin{proof}
	Define
	\(
	\mathscr S
	=
	\row
	\begin{pmatrix}
		\mathcal L\\
		CA\\
		C
	\end{pmatrix}.
	\)
	We prove the result in two steps:
	\(
	\row(LA)\subseteq\mathscr S,
	\,\,
	\row(RA)\subseteq\mathscr S.
	\)
	
	\medskip
	\noindent
	\textit{Step 1: Show that $\row(LA)\subseteq\mathscr S$.}
	
	For each $j\in\{1,\ldots,r\}$, by definition of the predecessor matrix
	$\mathcal M_{p+j}^{-}$, every retained row in the power-$\eta_j$
	block
	\[
	\mathcal G_{p+j}^{(\eta_j)}A^{\eta_j}
	\]
	was retained before $L_jA^{\eta_j}$ was examined. At the fixed
	power $\eta_j$, candidates are examined in the order
	\[
	C_1,\ldots,C_p,L_1,\ldots,L_r.
	\]
Therefore, $\mathcal G_{p+j}^{(\eta_j)}$ can contain only output
generators $C_i$ and preceding functional generators
$L_1,\ldots,L_{j-1}$. Hence

$$
\row\!\left(
\mathcal G_{p+j}^{(\eta_j)}
\right)
\subseteq
\row
\begin{pmatrix}
	C\\
	L_1\\
	\vdots\\
	L_{j-1}
\end{pmatrix},
$$

and therefore
\begin{equation}
	\row\!\left(
	\mathcal G_{p+j}^{(\eta_j)}A
	\right)
	\subseteq
	\row
	\begin{pmatrix}
		CA\\
		L_1A\\
		\vdots\\
		L_{j-1}A
	\end{pmatrix}.
	\label{eq:terminal-G-containment}
\end{equation}
When $j=1$, the functional block

$$
\begin{pmatrix}
	L_1A\\
	\vdots\\
	L_{j-1}A
\end{pmatrix}
$$
is absent from the right-hand side of
\eqref{eq:terminal-G-containment}.

	We now prove by induction on $j$ that
	\[
	L_jA\in\mathscr S,
	\qquad j=1,\ldots,r.
	\]
	
	\medskip
	\noindent
	\emph{Base case: $j=1$.}
	
	When $j=1$, there are no preceding functional generators, so
	\eqref{eq:terminal-G-containment} reduces to
	\[
	\row\!\left(
	\mathcal G_{p+1}^{(\eta_1)}A
	\right)
	\subseteq
	\row(CA)
	\subseteq
	\mathscr S.
	\]
	Hence
	\begin{equation}
		\phi_{p+1}^{(\eta_1)}
		\mathcal G_{p+1}^{(\eta_1)}A
		\in\mathscr S. 
		\label{eq:terminal-G-base}
	\end{equation}
	
	If $\eta_1>1$, then $\ell_{1,0}=L_1$, and
	\eqref{eq:functional-chain-recursion} with $t=0$ gives
	\[
	L_1A
	=
	\ell_{1,1}
	+
	\phi_{p+1}^{(\eta_1)}
	\mathcal G_{p+1}^{(\eta_1)}A.
	\]
	The first term belongs to $\mathscr S$ because
	$\ell_{1,1}$ is a row of $\mathcal L$, while the second term
	belongs to $\mathscr S$ by \eqref{eq:terminal-G-base}. Therefore,
	\[
	L_1A\in\mathscr S.
	\]
	
	If $\eta_1=1$, then $L_1=\ell_{1,0}$ is already the terminal row,
	and \eqref{eq:functional-chain-terminal} gives
	\[
	L_1A
	=
	\phi_{p+1}^{(0)}
	\mathcal G_{p+1}^{(0)}
	+
	\phi_{p+1}^{(1)}
	\mathcal G_{p+1}^{(1)}A.
	\]
	Since every row of $\mathcal G_{p+1}^{(0)}$ is an original
	generator $C_i$ or $L_i$,
	\[
	\row\!\left(\mathcal G_{p+1}^{(0)}\right)
	\subseteq
	\row(C,L)
	\subseteq
	\row(C,\mathcal L)
	\subseteq
	\mathscr S,
	\]
	we have
	\[
	\phi_{p+1}^{(0)}
	\mathcal G_{p+1}^{(0)}
	\in
	\row(C,L)
	\subseteq
	\mathscr S.
	\]
	Moreover, since $\eta_1=1$, \eqref{eq:terminal-G-base} gives
	\[
	\phi_{p+1}^{(1)}
	\mathcal G_{p+1}^{(1)}A
	\in\mathscr S.
	\]
	Thus
	\[
	L_1A\in\mathscr S.
	\]
	
	This proves the base case.
	
	\medskip
	\noindent
	\emph{Induction step.}
	
	Let $j\in\{2,\ldots,r\}$ and assume that
	\[
	L_1A,\ldots,L_{j-1}A\in\mathscr S.
	\]
	Then, since $\row(CA)\subseteq\mathscr S$,
	\eqref{eq:terminal-G-containment} implies
	\begin{equation}
		\phi_{p+j}^{(\eta_j)}
		\mathcal G_{p+j}^{(\eta_j)}A
		\in\mathscr S. 
		\label{eq:terminal-G-in-S}
	\end{equation}
	
	If $\eta_j>1$, then $\ell_{j,0}=L_j$, and
	\eqref{eq:functional-chain-recursion} with $t=0$ gives
	\[
	L_jA
	=
	\ell_{j,1}
	+
	\phi_{p+j}^{(\eta_j)}
	\mathcal G_{p+j}^{(\eta_j)}A.
	\]
	The first term belongs to $\mathscr S$ because
	$\ell_{j,1}$ is a row of $\mathcal L$, while the second term
	belongs to $\mathscr S$ by \eqref{eq:terminal-G-in-S}. Hence
	\[
	L_jA\in\mathscr S.
	\]
	
	If $\eta_j=1$, then $L_j=\ell_{j,0}$ is already the terminal row,
	and \eqref{eq:functional-chain-terminal} gives
	\[
	L_jA
	=
	\phi_{p+j}^{(0)}
	\mathcal G_{p+j}^{(0)}
	+
	\phi_{p+j}^{(1)}
	\mathcal G_{p+j}^{(1)}A.
	\]
	Since every row of $\mathcal G_{p+j}^{(0)}$ is an original
	generator $C_i$ or $L_i$,
	\[
	\row\!\left(\mathcal G_{p+j}^{(0)}\right)
	\subseteq
	\row(C,L)
	\subseteq
	\row(C,\mathcal L)
	\subseteq
	\mathscr S,
	\]
	the first term satisfies
	\[
	\phi_{p+j}^{(0)}
	\mathcal G_{p+j}^{(0)}
	\in
	\row(C,L)
	\subseteq
	\mathscr S.
	\]
	The second term
	\[
	\phi_{p+j}^{(1)}
	\mathcal G_{p+j}^{(1)}A
	\]
	belongs to $\mathscr S$ by \eqref{eq:terminal-G-in-S}, since in
	this case $\eta_j=1$. Therefore,
	\[
	L_jA\in\mathscr S.
	\]
	
	Thus the induction step is proved. Hence, by induction,
	\[
	L_jA\in\mathscr S
	\qquad
	\text{for all }j=1,\ldots,r, 
	\]
	and therefore
	\begin{equation}
			\row(LA)\subseteq\mathscr S. 
		\label{eq:LA-contained}
	\end{equation}
	
	\medskip
	\noindent
	\textit{Step 2: Show that $\row(RA)\subseteq\mathscr S$.}
	
	Consider first a nonterminal augmented row $\ell_{j,t}$, where
	$1\leq t\leq\eta_j-2$, whenever this index set is nonempty. By
	\eqref{eq:functional-chain-recursion},
	\[
	\ell_{j,t}A
	=
	\ell_{j,t+1}
	+
	\phi_{p+j}^{(\eta_j-t)}
	\mathcal G_{p+j}^{(\eta_j-t)}A.
	\]
	The first term belongs to $\mathscr S$ because
	$\ell_{j,t+1}$ is a row of $\mathcal L$.
	
	Every row of $\mathcal G_{p+j}^{(\eta_j-t)}$ is an original
	generator $C_i$ or $L_i$. Hence
	\[
	\row\!\left(
	\mathcal G_{p+j}^{(\eta_j-t)}A
	\right)
	\subseteq
	\row(CA,LA).
	\]
	Therefore,
	\[
	\phi_{p+j}^{(\eta_j-t)}
	\mathcal G_{p+j}^{(\eta_j-t)}A
	\in
	\row(CA,LA)
	\subseteq
	\mathscr S,
	\]
	where the last inclusion follows from the definition of
	$\mathscr S$ and \eqref{eq:LA-contained}. Thus
	\[
	\ell_{j,t}A\in\mathscr S.
	\]
	
	It remains to consider the terminal augmented row
	$\ell_{j,\eta_j-1}$ when $\eta_j>1$. By
	\eqref{eq:functional-chain-terminal},
	\[
	\ell_{j,\eta_j-1}A
	=
	\phi_{p+j}^{(0)}
	\mathcal G_{p+j}^{(0)}
	+
	\phi_{p+j}^{(1)}
	\mathcal G_{p+j}^{(1)}A.
	\]
	Since every row of $\mathcal G_{p+j}^{(0)}$ and
	$\mathcal G_{p+j}^{(1)}$ is an original generator $C_i$ or $L_i$,
	\[
	\row\!\left(\mathcal G_{p+j}^{(0)}\right)
	\subseteq \row(C,L),
	\quad
	\row\!\left(\mathcal G_{p+j}^{(1)}A\right)
	\subseteq \row(CA,LA).
	\]
	Consequently,
	\[
	\phi_{p+j}^{(0)}
	\mathcal G_{p+j}^{(0)}
	\in
	\row(C,L)
	\subseteq\mathscr S,
	\]
	and
	\[
	\phi_{p+j}^{(1)}
	\mathcal G_{p+j}^{(1)}A
	\in
	\row(CA,LA)
	\subseteq\mathscr S,
	\]
	where the latter inclusion follows from the definition of
	$\mathscr S$ and \eqref{eq:LA-contained}. Therefore,
	\[
	\ell_{j,\eta_j-1}A\in\mathscr S.
	\]
	
	Thus every augmented row in $R$ has its $A$-image in
	$\mathscr S$, and hence
	\[
	\row(RA)\subseteq\mathscr S.
	\]
	
	Finally, since
	\[
	\mathcal L
	=
	\begin{pmatrix}
		L\\
		R
	\end{pmatrix},
	\]
	we obtain
	\[
	\row(\mathcal LA)
	=
	\row
	\begin{pmatrix}
		LA\\
		RA
	\end{pmatrix}
	\subseteq
	\mathscr S
	=
	\row
	\begin{pmatrix}
		\mathcal L\\
		CA\\
		C
	\end{pmatrix}.
	\]
	Therefore, condition {\rm (A)} holds.
\end{proof}

Theorem~\ref{prop:augmentation-satisfies-A} establishes that the constructed
augmented functional matrix $\mathcal L$ satisfies condition $(A)$.
The next result shows that its augmentation dimension is minimum.
Although this minimum dimension is fixed, the corresponding augmentation
need not be unique: other augmented functional matrices may also satisfy
condition $(A)$ with the same minimum augmentation dimension.

This nonuniqueness becomes important when condition $(S)$ is considered.
A particular minimum augmentation satisfying condition $(A)$ may fail
condition $(S)$, while another minimum augmentation of the same dimension
may satisfy it. It is therefore necessary to characterize the complete
family of minimum augmentations satisfying condition $(A)$ before
testing condition $(S)$.

\subsection{Minimum augmentation for condition (A)} \label{subsec:minimum-augmentation-A}

The augmentation construction of Section~\ref{subsec:augmentation-construction-A} provides an augmented functional matrix satisfying condition {\rm (A)}. We now show that its augmentation dimension is minimum.

\begin{thm}[Minimum augmentation for condition (A)]
	\label{thm:minimum-A-augmentation}
	
	Let $\eta_j$, $j=1,\ldots,r$, denote the functional-observability
	indices associated with the prescribed functional rows $L_j$.
	Then the minimum number of additional functional rows required to
	satisfy condition {\rm (A)} is
	\begin{equation}
			(q_0-r)
			=
			\sum_{j=1}^{r}(\eta_j-1). \nonumber 
	\end{equation}
	Equivalently, the minimum dimension of an augmented functional matrix
	satisfying condition {\rm (A)} is
	\begin{equation}
			q_0
			=
			\sum_{j=1}^{r}\eta_j. \nonumber 
	\end{equation}
	
\end{thm}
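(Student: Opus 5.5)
The plan has two halves: achievability, and a lower bound that says no augmentation satisfying (A) can do better.

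\emph{Achievability.} Theorem~\ref{prop:augmentation-satisfies-A} already shows that the $\mathcal L$ built in \eqref{eq:augmented-functional-matrix} satisfies (A), and it has exactly $\sum_j\eta_j$ rows. It remains to check that these rows are linearly independent, so that $\rank\mathcal L=q_0$. From \eqref{eq:functional-row-construction}, each $\ell_{j,t}$ equals $L_jA^t$ plus a combination of rows $g_iA^{s}$ with $s\le t$, all drawn from generators preceding $L_jA^{\eta_j}$ in the global order. I would order the rows by their leading term $L_jA^t$ and use a triangularity argument against the independent rows of $\mathcal M$. If the correction terms bring in retained rows outside $\{L_kA^s: s<\eta_k\}$, I would instead compare dimensions modulo $\row(C,CA,\ldots)$.

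\emph{Lower bound.} Let $\mathcal L'$ be any matrix with $\row(L)\subseteq\row(\mathcal L')$ that satisfies (A), i.e. $\row(\mathcal L'A)\subseteq\row(\mathcal L')+\row(C)+\row(CA)$. Iterating this inclusion gives, for every $t\ge0$,
\[
\row(\mathcal L'A^t)\subseteq \row(\mathcal L')+\row(C,CA,\ldots,CA^{t+1}).
\]
In particular each $L_jA^t$ lies in $\row(\mathcal L')+\mathcal C_{t}$, where $\mathcal C_t:=\row(C,\ldots,CA^{t})$; the exact index is adjusted to the power bookkeeping. Now let $T_k$ be the span of the retained rows of $\mathcal M$ of power at most $k$. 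Then
\[
T_k\subseteq \row(\mathcal L')+\mathcal C_{k+1},
\]
so
\[
\rank\mathcal L'\ \ge\ \dim\bigl(T_k+\mathcal C_{k+1}\bigr)-\dim\mathcal C_{k+1}
\]
for every $k$. Taking $k$ large and using functional observability, $\row(\mathcal O_L)\subseteq\row(\mathcal O_C)$, gives only a trivial bound. The useful information is therefore the per-level count: the number of $L$-rows retained at power $k$ is $\#\{j:\eta_j>k\}$, and summing over $k$ gives $q_0$. The aim is to show that $\row(\mathcal L')$ must contribute at least $\#\{j:\eta_j>k\}$ new dimensions at each level $k$, after quotienting by $C$-iterates and by the contributions of lower levels. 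Here I would use that at each fixed power the $C$-generators are examined before the $L$-generators. Hence $L_jA^k$ is retained exactly when it is independent modulo the earlier rows together with the $C$-rows at power $k$. Lemma~\ref{lem:persistence-of-dependence} then ensures these counts are monotone in $k$, so the levels chain together.

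\emph{Main obstacle.} The delicate step is this per-level counting. Some $C_iA^q$ may have been discarded in the global procedure because they depend on earlier retained $L$-rows, so the $C$-retained rows do not simply span $\mathcal C_q$. As a result, the filtration $T_k$ and the filtration $\row(\mathcal L')+\mathcal C_{k}$ need not align, and the inequality can lose dimensions. I would first try a quotient argument: work in $\mathbb R^{1\times n}/\mathcal C_{k+1}$ and define a map from $\row(\mathcal L')$ to the graded pieces $T_k/T_{k-1}$, showing it is surjective onto the $L$-generated part of each piece. If that fails, I would fall back on showing directly that any (A)-closed $\row(\mathcal L')\ni L_j$ must contain, modulo $\row(C,CA)$, vectors whose leading terms are $L_jA^t$ for $t<\eta_j$. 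These can be extracted by repeatedly applying the inclusion (A) to $L_j$, and their independence follows from the independence of the retained rows of $\mathcal M$.
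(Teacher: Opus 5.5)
Your upper-bound half is fine, and it proves more than is needed. The paper simply counts the $\sum_j(\eta_j-1)$ rows of $R$ and invokes Theorem~\ref{prop:augmentation-satisfies-A}. Your triangularity check does work: by Lemma~\ref{lem:persistence-of-dependence}, every correction term in $\ell_{j,t}$ is a combination of retained rows that precede $L_jA^t$ in the global order. But independence also follows for free from the lower bound, because deleting a dependent row of $R$ leaves $\row(\mathcal L)$, and hence condition (A), unchanged.

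\textbf{The gap is in the lower bound.} Your main route, counting new dimensions level by level modulo $\mathcal C_{k+1}$, is never established. As you suspect, it is structurally lossy: quotienting by all $C$-iterates up to a given power also removes $C$-iterates that come \emph{after} retained $L$-rows in the global order. For example, if $L_1=C_1A$, then $L_1$ is retained at power $0$ but vanishes modulo $\mathcal C_1$, so it is never counted. Neither the proposed surjection onto graded pieces nor a filtration comparison fixes this. The order that matters is the total global order, not the power grading.

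The missing step is the direct count your fallback gestures at, done with the correct bookkeeping:
\begin{itemize}
\item Iterating (A) from $L_j$ gives vectors $u_{j,t}\in\row(\mathcal L')$ with $L_jA^t-u_{j,t}\in\row(C,CA,\ldots,CA^t)$. The index is exactly $t$. It is not $t+1$, and the claim ``modulo $\row(C,CA)$'' is false for $t\ge 2$.
\item Suppose $\sum c_{j,t}u_{j,t}=0$ nontrivially over $0\le t<\eta_j$. Take the last $(j,t)$ in the global order with $c_{j,t}\neq 0$.
\item Then $L_jA^t$ is a combination of earlier $L_iA^q$ with $q\le t$ and of $C_iA^m$ with $m\le t$. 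All of these precede $L_jA^t$, because within each power the $C$-generators come first.
\item Every row preceding $L_jA^t$, retained or not, lies in the span of the rows retained before it. So $L_jA^t$ would have been rejected, contradicting $t<\eta_j$. Hence $\rank\mathcal L'\ge\sum_j\eta_j$.
\end{itemize}
With $\mathcal C_{t+1}$ in place of $\mathcal C_t$ the argument collapses, since $C_iA^{t+1}$ is examined after $L_jA^t$. Your worry about discarded $C_iA^q$ does no harm here, because such rows already lie in the span of earlier retained rows. In fact the $u_{j,t}$ are unitriangular with respect to the retained basis $\mathcal M$.

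The paper runs exactly this count. It phrases it through coefficient vectors $w_{j,k}\in\mathbb R^{1\times(r+\widetilde q)}$, after the shift $\overline{\mathcal L}=\widetilde{\mathcal L}-G_1C$. That shift gives $\overline{\mathcal L}A=T\overline{\mathcal L}+GC$ and makes the power bookkeeping $h_{j,k}\in\row(C,\ldots,CA^k)$ transparent.
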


\begin{proof}
	We prove the result by establishing matching upper and lower bounds.
	
	\medskip
	\noindent
	\textit{Upper bound.}
	For each prescribed functional row $L_j$, the construction in
	Subsection~\ref{subsec:augmentation-construction-A} introduces
	$\eta_j-1$ additional rows. Hence the total number of augmented
	rows is
	$$
	\sum_{j=1}^{r}(\eta_j-1).
	$$
	By Theorem~\ref{prop:augmentation-satisfies-A}, the resulting
	augmented functional matrix satisfies condition {\rm (A)}.
	Therefore,
	\begin{equation}
		(q_0-r)
		\leq
		\sum_{j=1}^{r}(\eta_j-1). 
		\label{eq:A-upper-bound}
	\end{equation}
	
	\medskip
	\noindent
	
\textit{Lower bound.}
Let
$$
\widetilde{\mathcal L}
=
\begin{pmatrix}
	L\\
	\widetilde R
\end{pmatrix},
\qquad
\widetilde R\in\mathbb R^{\widetilde q\times n},
$$
be any augmentation satisfying condition {\rm (A)}. We show that
$$
\widetilde q
\geq
\sum_{j=1}^{r}(\eta_j-1).
$$

Condition {\rm (A)} implies that, for suitable matrices $T$, $G_0$,
and $G_1$,
$$
\widetilde{\mathcal L}A
=
T\widetilde{\mathcal L}
+
G_0C
+
G_1CA.
$$
Define
$$
\overline{\mathcal L}
=
\widetilde{\mathcal L}-G_1C.
$$
Then
$$
\overline{\mathcal L}A
=
T\overline{\mathcal L}+GC,
\qquad
G:=TG_1+G_0,
$$
and
$$
\row
\begin{pmatrix}
	\overline{\mathcal L}\\
	C
\end{pmatrix}
=
\row
\begin{pmatrix}
	\widetilde{\mathcal L}\\
	C
\end{pmatrix}.
$$
Since the rows of $L$ are contained in $\widetilde{\mathcal L}$,
there exist matrices $P$ and $E$ such that
$$
L=P\overline{\mathcal L}+EC.
$$

Repeatedly using
$\overline{\mathcal L}A=T\overline{\mathcal L}+GC$ therefore gives,
for every $k\geq0$,
\begin{equation}
	L_jA^k
	=
	w_{j,k}\overline{\mathcal L}
	+
	h_{j,k},
	\label{eq:LjAk-wjk}
\end{equation}
where
$$
\row(h_{j,k})
\subseteq
\row
\begin{pmatrix}
	C\\
	CA\\
	\vdots\\
	CA^k
\end{pmatrix}.
$$

Now consider the functional candidate rows retained by the global
rank-selection procedure:
$$
L_jA^k,
\qquad
j=1,\ldots,r,
\qquad
0\leq k<\eta_j.
$$
We claim that the corresponding coefficient rows $w_{j,k}$ are
linearly independent.

Suppose otherwise, and let $w_{j,k}$ be the first such coefficient
row, in the global examination order, that is a linear combination
of preceding coefficient rows. Then there exist scalars
$\gamma_{i,q}$ such that
$$
w_{j,k}
=
\sum_{(i,q)\prec(j,k)}
\gamma_{i,q}w_{i,q},
$$
where $(i,q)\prec(j,k)$ means that the functional candidate $L_iA^q$
is examined before $L_jA^k$ in the global order. Substituting this
relation into \eqref{eq:LjAk-wjk} and using
$$
w_{i,q}\overline{\mathcal L}
=
L_iA^q-h_{i,q}
$$
for the preceding functional candidates gives
$$
\begin{aligned}
	L_jA^k
	&=
	\sum_{(i,q)\prec(j,k)}
	\gamma_{i,q}w_{i,q}\overline{\mathcal L}
	+h_{j,k}\\
	&=
	\sum_{(i,q)\prec(j,k)}
	\gamma_{i,q}L_iA^q
	+
	h_{j,k}
	-
	\sum_{(i,q)\prec(j,k)}
	\gamma_{i,q}h_{i,q}.
\end{aligned}
$$

Every functional candidate $L_iA^q$ appearing on the right-hand side
has been examined before $L_jA^k$. Moreover, since $q\leq k$, all
output-generated rows occurring in $h_{i,q}$ and $h_{j,k}$ have also
been examined before $L_jA^k$. Every previously examined row belongs
to the span of the rows retained up to that point, whether it was
itself retained or rejected. Hence $L_jA^k$ belongs to the span of
the rows retained before it is examined.

Let $\mathcal M^-$ denote the matrix of rows retained immediately
before $L_jA^k$ is examined. Then
$$
\rank
\begin{pmatrix}
	\mathcal M^-\\
	L_jA^k
\end{pmatrix}
=
\rank(\mathcal M^-).
$$
By the rank-retention rule, $L_jA^k$ would therefore not be retained.
This contradicts $k<\eta_j$, since, by definition of $\eta_j$,
$$
L_j,\;
L_jA,\;
\ldots,\;
L_jA^{\eta_j-1}
$$
are all retained. Therefore the coefficient rows
$$
w_{j,k},
\qquad
j=1,\ldots,r,
\qquad
0\leq k<\eta_j,
$$
are linearly independent.
There are
$$
\sum_{j=1}^{r}\eta_j
$$
such rows. Since $\overline{\mathcal L}$ has $r+\widetilde q$ rows,
each $w_{j,k}$ belongs to
$\mathbb R^{1\times(r+\widetilde q)}$. Hence
$$
\sum_{j=1}^{r}\eta_j
\leq
r+\widetilde q,
$$
and therefore
\begin{equation}
	\widetilde q
	\geq
	\sum_{j=1}^{r}(\eta_j-1).
	\label{eq:A-lower-bound}
\end{equation}
Since $\widetilde{\mathcal L}$ was arbitrary, this lower bound holds
for every augmentation satisfying condition {\rm (A)}. Combining
\eqref{eq:A-upper-bound} and \eqref{eq:A-lower-bound} gives
$$
(q_0-r)
=
\sum_{j=1}^{r}(\eta_j-1).
$$
Consequently,
$$
q_0
=
\sum_{j=1}^{r}\eta_j.
$$
\end{proof}

\begin{remark}[Nonuniqueness of the minimum augmentation]
	\label{rem:nonunique-minimum-A}
	
	Theorem~\ref{thm:minimum-A-augmentation} determines uniquely the
	minimum augmentation dimension
	\begin{equation}
		(q_0-r)
		=
		\sum_{j=1}^{r}(\eta_j-1), \nonumber 
	\end{equation}
	but it does not imply uniqueness of the corresponding minimum
	augmentation subspace.

	In general, distinct augmented functional spaces of the same minimum
	dimension may satisfy condition {\rm (A)}. The ordered
	rank-retention construction selects a particular minimum
	augmentation determined by the prescribed generator and examination
	orders.
	Thus the minimum dimension is unique, whereas the minimizing
	augmentation need not be.
\end{remark}

\section{Complete Family of Minimum Condition-$(A)$ Augmentations}
\label{sec:complete-minimum-A-augmentations}

Section~\ref{sec:augmentation-construction} constructs one minimum
augmentation satisfying condition $(A)$ and establishes that the
minimum functional dimension is
$$
q_0
=
\sum_{j=1}^{r}\eta_j.
$$
The minimum augmentation, however, need not be unique. We now
characterize all minimum augmentations satisfying condition $(A)$.

Let
$$
\mathcal L
=
\begin{pmatrix}
	L\\
	R
\end{pmatrix}
$$
be an augmentation of the prescribed functional matrix $L$. Since the
measured quantity $Cx$ is already available, components belonging to \(\operatorname{row}(C)\) do not contribute additional functional directions.

Let
$$
d_C:=n-p,
$$
and choose
$$
N_C\in\mathbb R^{n\times d_C}
$$
whose columns form a basis of $\ker C$. Then
$$
CN_C=0,
\qquad
\rank(N_C)=d_C.
$$
Choose also a left inverse
$$
M_C\in\mathbb R^{d_C\times n}
$$
such that
$$
M_CN_C=I_{d_C}.
$$
Multiplication by $N_C$ factors out precisely the measured directions,
since
$$
vN_C=0
\quad\Longleftrightarrow\quad
v\in\operatorname{row}(C).
$$
Consequently,
$$
\rank
\begin{pmatrix}
	C\\
	\mathcal L
\end{pmatrix}
-p
=
\rank(\mathcal LN_C).
$$
Thus $\rank(\mathcal LN_C)$ is the number of independent functional
directions represented by $\mathcal L$ modulo the measured row space
$\operatorname{row}(C)$.

By the standing assumption
$$
\rank
\begin{pmatrix}
	C\\
	L
\end{pmatrix}
=
p+r,
$$
we have
$$
\rank(LN_C)=r.
$$
Since a minimum augmentation has functional dimension
$$
q_0
=
\sum_{j=1}^{r}\eta_j,
$$
it requires
$$
k
:=
(q_0-r)
=
\sum_{j=1}^{r}(\eta_j-1)
$$
additional independent directions beyond those represented by $L$.
Accordingly, we seek minimum augmentations satisfying
$$
\rank(\mathcal LN_C)
=
r+k
=
q_0.
$$
Since such an $\mathcal L$ has exactly $r+k$ rows, this also implies
$$
\rank(\mathcal L)=r+k=q_0.
$$
We first describe all possible choices of these $k$ additional
directions and then impose condition $(A)$.

\subsection{Parameterization of the additional directions}
\label{subsec:parameterization-additional-directions}

Since
$$
\rank(LN_C)=r,
$$
there exists a nonsingular matrix
$$
T\in\mathbb R^{d_C\times d_C}
$$
such that
$$
LN_CT
=
\begin{pmatrix}
	I_r&0
\end{pmatrix}.
$$
Thus, in the coordinates defined by $T$, the prescribed functional
directions occupy the first $r$ coordinates.

Set
$$
s:=d_C-r=n-p-r.
$$
A minimum augmentation must add $k$ independent directions to the
$r$ prescribed directions. By subtracting suitable linear combinations
of the prescribed rows from the additional rows, every minimum candidate
row space can be represented in the transformed coordinates as
$$
\begin{pmatrix}
	I_r&0\\
	0&Y
\end{pmatrix},
\qquad
Y\in\mathbb R^{k\times s},
\qquad
\rank(Y)=k.
$$
Let
$$
J\subseteq\{1,\ldots,s\},
\quad
|J|=k,
$$
denote a set of $k$ pivot-column indices.
For $k=0$, take
$J=\varnothing$ and no parameter $\Theta$ is required. For each
$k>0$ and such $J$, choose a permutation matrix $\Pi_J$
such that, in
$$
Y_J(\Theta)
=
\begin{pmatrix}
	I_k&\Theta
\end{pmatrix}
\Pi_J^T,
$$
the columns originating from $I_k$ occupy the positions indexed by
$J$, while
$$
\Theta\in\mathbb R^{k\times(s-k)}
$$
contains the remaining free columns. Since the columns indexed by $J$
are the columns of $I_k$, they are linearly independent, and therefore
$$
\rank Y_J(\Theta)=k
$$
for every $\Theta$.

Conversely, let
$$
Y\in\mathbb R^{k\times s},
\quad
\rank(Y)=k.
$$
Then $Y$ contains $k$ linearly independent columns. Let $J$ denote
their indices. A nonsingular row transformation can normalize these
columns to $I_k$, while the remaining $s-k$ columns form a matrix
$\Theta\in\mathbb R^{k\times(s-k)}$. Hence
$$
\operatorname{row}(Y)
=
\operatorname{row}\bigl(Y_J(\Theta)\bigr)
$$
for some $J$ and $\Theta$. Consequently, as $J$ ranges over all
$k$-element subsets of $\{1,\ldots,s\}$ and $\Theta$ ranges over
$\mathbb R^{k\times(s-k)}$, the matrices $Y_J(\Theta)$ represent all
$k$-dimensional row spaces in $\mathbb R^s$.

For each pair $(J,\Theta)$, define
$$
W_J(\Theta)
:=
\begin{pmatrix}
	I_r&0\\
	0&Y_J(\Theta)
\end{pmatrix}.
$$
Since
$$
\rank Y_J(\Theta)=k,
$$
it follows that
$$
\rank W_J(\Theta)
=
r+k
=
q_0.
$$
Thus every $W_J(\Theta)$ has the required minimum functional dimension.
It remains only to determine which members of this family satisfy
condition $(A)$.

\subsection{Imposing condition $(A)$}
\label{subsec:imposing-condition-A}
The preceding parameterization describes all minimum-dimensional
candidates containing the prescribed functional directions. We now
restrict this family to those candidates that satisfy condition $(A)$.
To do so, we first express condition $(A)$ entirely in the reduced
coordinates introduced above.
\begin{lemma}[Reduced form of condition $(A)$]
	\label{lem:reduced-condition-A}
	Let
	$$
	\bar A=M_CAN_C,
	\qquad
	B_A=CAN_C.
	$$
	Then condition $(A)$,
	$$
	\operatorname{row}(\mathcal LA)
	\subseteq
	\operatorname{row}
	\begin{pmatrix}
		\mathcal L\\
		CA\\
		C
	\end{pmatrix},
	$$
	is equivalent to
	$$
	\operatorname{row}\bigl((\mathcal LN_C)\bar A\bigr)
	\subseteq
	\operatorname{row}
	\begin{pmatrix}
		\mathcal LN_C\\
		B_A
	\end{pmatrix}.
	$$
\end{lemma}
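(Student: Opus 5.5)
The plan is to pass to coordinates adapted to $\ker C$. Choose $C^+$ with $CC^+=I_p$ and $M_CC^+=0$, so that
\[
I_n = N_CM_C + C^+C .
\]
Such a $C^+$ exists because $\begin{pmatrix} M_C\\ C\end{pmatrix}$ is nonsingular: $N_C$ spans $\ker C$ and $M_CN_C=I$. With this decomposition, any row $v$ satisfies $v\in\row(C)$ iff $vN_C=0$, as already noted in the paper. The equivalence then comes from right-multiplying by $N_C$, modulo $\row(C)$.

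\emph{Forward direction.} Assume (A), so $\mathcal LA = T\mathcal L + G_1CA + G_0C$. Right-multiplying by $N_C$ kills $G_0C$ and gives
\[
\mathcal LAN_C = T\mathcal LN_C + G_1B_A .
\]
Next expand
\[
\mathcal LAN_C = \mathcal L(N_CM_C + C^+C)AN_C = (\mathcal LN_C)\bar A + \mathcal LC^+B_A .
\]
Hence $(\mathcal LN_C)\bar A = T(\mathcal LN_C) + (G_1 - \mathcal LC^+)B_A$, which is the reduced inclusion.

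\emph{Reverse direction.} Assume $(\mathcal LN_C)\bar A = T\mathcal LN_C + HB_A$. By the same identity, $\mathcal LAN_C = T\mathcal LN_C + (H + \mathcal LC^+)B_A$. Therefore the row $\mathcal LA - T\mathcal L - (H + \mathcal LC^+)CA$ annihilates $N_C$, so it lies in $\row(C)$, say it equals $G_0C$. This gives $\mathcal LA = T\mathcal L + (H + \mathcal LC^+)CA + G_0C$, which is (A).

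The only delicate point is the decomposition $I = N_CM_C + C^+C$, which relies on choosing $C^+$ compatibly with $M_C$. If $M_C$ is an arbitrary left inverse, I would first verify that $\begin{pmatrix} M_C\\ C\end{pmatrix}$ is invertible (a row vector $v$ annihilated by both would give $x=N_Cy$ with $y=M_CN_Cy=0$) and define $C^+$ from the last $p$ columns of its inverse. Everything else is routine row-space bookkeeping.
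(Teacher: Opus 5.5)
Your argument is correct and is essentially the paper's proof: the paper likewise writes $\mathcal L=(\mathcal LN_C)M_C+\Gamma C$, deduces $\mathcal LAN_C=(\mathcal LN_C)\bar A+\Gamma B_A$, and transfers the inclusion in both directions using $vN_C=0\iff v\in\row(C)$. The only difference is that your factor $C^+$ makes $\Gamma=\mathcal LC^+$ explicit. This costs you the invertibility check on $\begin{pmatrix}M_C\\ C\end{pmatrix}$, where the kernel argument should be phrased for a column vector $x$ with $M_Cx=0$, $Cx=0$; the paper instead obtains $\Gamma$ directly from the fact that every row of $\mathcal L-(\mathcal LN_C)M_C$ annihilates $N_C$.
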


\begin{proof}
	To verify the claim, note first that for any row vector $v$,
	$$
	v-(vN_C)M_C
	$$
	annihilates $N_C$, since
	$$
	\bigl(v-(vN_C)M_C\bigr)N_C
	=
	vN_C-vN_CM_CN_C
	=
	0.
	$$
	Since the columns of \(N_C\) span \(\ker C\), a row vector \(v\) satisfies \(vN_C=0\) if and only if \(v\in\row(C)\), and 
	it follows that
	$$
	v-(vN_C)M_C\in\operatorname{row}(C).
	$$
	Hence there exists a row vector $\gamma\in\mathbb R^{1\times p}$ such that
	$$
	v-(vN_C)M_C=\gamma C,
	$$
	or equivalently,
	$$
	v=(vN_C)M_C+\gamma C.
	$$
	Applying this rowwise to $\mathcal L$ gives
	$$
	\mathcal L
	=
	(\mathcal LN_C)M_C+\Gamma C
	$$
	for some matrix $\Gamma$. Hence
	$$
	\mathcal LA
	=
	(\mathcal LN_C)M_CA+\Gamma CA.
	$$
	Multiplying by $N_C$ on the right yields
	$$
	\mathcal LAN_C
	=
	(\mathcal LN_C)M_CAN_C+\Gamma CAN_C,
	$$
	and therefore
	$$
	\mathcal LAN_C
	=
	(\mathcal LN_C)\bar A+\Gamma B_A.
	$$
	
	Now suppose condition $(A)$ holds, that is,
	$$
	\operatorname{row}(\mathcal LA)
	\subseteq
	\operatorname{row}
	\begin{pmatrix}
		\mathcal L\\
		CA\\
		C
	\end{pmatrix}.
	$$
	Multiplying by $N_C$ gives
	$$
	\operatorname{row}(\mathcal LAN_C)
	\subseteq
	\operatorname{row}
	\begin{pmatrix}
		\mathcal LN_C\\
		B_A
	\end{pmatrix},
	$$
	since $CN_C=0$. Using
	$$
	\mathcal LAN_C
	=
	(\mathcal LN_C)\bar A+\Gamma B_A,
	$$
	we obtain
	$$
	\operatorname{row}((\mathcal LN_C)\bar A)
	\subseteq
	\operatorname{row}
	\begin{pmatrix}
		\mathcal LN_C\\
		B_A
	\end{pmatrix}.
	$$
	
	Conversely, suppose
	$$
	\operatorname{row}((\mathcal LN_C)\bar A)
	\subseteq
	\operatorname{row}
	\begin{pmatrix}
		\mathcal LN_C\\
		B_A
	\end{pmatrix}.
	$$
	Then, by the identity above,
	$$
	\operatorname{row}(\mathcal LAN_C)
	\subseteq
	\operatorname{row}
	\begin{pmatrix}
		\mathcal LN_C\\
		B_A
	\end{pmatrix}.
	$$
	Hence there exist matrices $F$ and $G$ such that
	$$
	\mathcal LAN_C
	=
	F\mathcal LN_C+GB_A.
	$$
	Equivalently,
	$$
	(\mathcal LA-F\mathcal L-GCA)N_C=0.
	$$
	Therefore
	$$
	\operatorname{row}(\mathcal LA-F\mathcal L-GCA)
	\subseteq
	\operatorname{row}(C),
	$$
	and thus
	$$
	\operatorname{row}(\mathcal LA)
	\subseteq
	\operatorname{row}
	\begin{pmatrix}
		\mathcal L\\
		CA\\
		C
	\end{pmatrix}.
	$$
	This is precisely condition $(A)$.
\end{proof}

As established above, condition $(A)$ in the reduced coordinates is
equivalent to
$$
\operatorname{row}(\mathcal LN_C\bar A)
\subseteq
\operatorname{row}
\begin{pmatrix}
	\mathcal LN_C\\
	B_A
\end{pmatrix}.
$$
The matrices $W_J(\Theta)$ are expressed in the transformed coordinates
defined by $T$. Therefore the same coordinate transformation must be
applied to $\bar A$ and $B_A$. Define
$$
\bar A_T
:=
T^{-1}\bar A T,
\qquad
B_{A,T}
:=
B_AT.
$$
Condition $(A)$ for the candidate $W_J(\Theta)$ is then
$$
\operatorname{row}\bigl(W_J(\Theta)\bar A_T\bigr)
\subseteq
\operatorname{row}
\begin{pmatrix}
	W_J(\Theta)\\
	B_{A,T}
\end{pmatrix}.
$$
This inclusion holds if and only if adding the rows of
$W_J(\Theta)\bar A_T$ does not increase the rank. Hence condition $(A)$
is equivalent to
\begin{equation}
		\rank
		\begin{pmatrix}
			W_J(\Theta)\\
			B_{A,T}\\
			W_J(\Theta)\bar A_T
		\end{pmatrix}
		=
		\rank
		\begin{pmatrix}
			W_J(\Theta)\\
			B_{A,T}
		\end{pmatrix}. 	\label{eq:family-rank-test}
\end{equation}
For each pivot set $J$, define
\begin{IEEEeqnarray}{rcl}
\Omega_{A,J}^\star
:=&&
\Bigg\{
\Theta\in\mathbb R^{k\times(s-k)}:  \nonumber\\
&& \rank
\begin{pmatrix}
	W_J(\Theta)\\
	B_{A,T}\\
	W_J(\Theta)\bar A_T
\end{pmatrix}
=
\rank
\begin{pmatrix}
	W_J(\Theta)\\
	B_{A,T}
\end{pmatrix}
\Bigg\}. \nonumber 
\end{IEEEeqnarray}
Thus $\Omega_{A,J}^\star$ contains exactly those values of $\Theta$
for which the corresponding minimum candidate satisfies condition
$(A)$.

\subsection{Completeness of the parameterization}
\label{subsec:completeness-minimum-A}

We now show that the preceding construction produces every minimum
candidate satisfying condition $(A)$.


\begin{thm}
	\label{prop:complete-minimum-A-family}
	For every
	\[
	J\subseteq\{1,\ldots,s\},
	\qquad |J|=k,
	\]
	and every
	\[
	\Theta\in\Omega_{A,J}^\star,
	\]
	the matrix \(W_J(\Theta)\) is a reduced augmentation of minimum
	functional dimension satisfying condition~\((A)\).
	
	Conversely, every reduced augmentation of minimum functional
	dimension satisfying condition~\((A)\) is row-equivalent to
	\(W_J(\Theta)\) for at least one pair
	\[
	(J,\Theta),
	\qquad
	|J|=k,
	\qquad
	\Theta\in\Omega_{A,J}^\star.
	\]
	Consequently,
	\[
	\left\{
	W_J(\Theta):
	J\subseteq\{1,\ldots,s\},\
	|J|=k,\
	\Theta\in\Omega_{A,J}^\star
	\right\}
	\]
	represents, up to row equivalence, the complete family of reduced
	augmentations of minimum functional dimension satisfying
	condition~\((A)\).
\end{thm}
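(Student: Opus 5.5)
The plan is to prove the two directions separately, working entirely in the reduced, $T$-transformed coordinates. The link back to the original problem is Lemma~\ref{lem:reduced-condition-A} together with the identity $\rank\begin{pmatrix}C\\ \mathcal L\end{pmatrix}-p=\rank(\mathcal LN_C)$. I will call a \emph{reduced augmentation} of $\mathcal L$ the matrix $\mathcal LN_CT$. Condition $(A)$ for $\mathcal L$ is then equivalent to
\[
\row\bigl((\mathcal LN_CT)\bar A_T\bigr)\subseteq\row\begin{pmatrix}\mathcal LN_CT\\ B_{A,T}\end{pmatrix}.
\]
This follows from Lemma~\ref{lem:reduced-condition-A} after right-multiplying by the nonsingular $T$: indeed $(\mathcal LN_C)\bar A T=(\mathcal LN_CT)\bar A_T$ and $B_AT=B_{A,T}$, and right multiplication by a nonsingular matrix preserves row-space inclusions.

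\emph{Forward direction.} Fix $J$ with $|J|=k$ and $\Theta\in\Omega_{A,J}^\star$. First, $W_J(\Theta)$ contains the prescribed directions, since its first $r$ rows equal $(I_r\;0)=LN_CT$. Second, $\rank W_J(\Theta)=r+k=q_0$ because $\rank Y_J(\Theta)=k$, which was already shown. Third, the defining rank equality \eqref{eq:family-rank-test} of $\Omega_{A,J}^\star$ says exactly that adding the rows of $W_J(\Theta)\bar A_T$ does not increase the rank, i.e.\ the reduced inclusion above holds. To realize $W_J(\Theta)$ as an actual augmentation, I take $\mathcal L=\begin{pmatrix}L\\ R\end{pmatrix}$ with $R=Y$-part$\cdot T^{-1}M_C$ padded appropriately; concretely, $\mathcal L:=W_J(\Theta)T^{-1}M_C$ with its first block replaced by $L$. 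Since $M_CN_C=I$ and $LN_CT=(I_r\;0)$, we get $\mathcal LN_CT=W_J(\Theta)$. Lemma~\ref{lem:reduced-condition-A} then transfers $(A)$ to $\mathcal L$, and Theorem~\ref{thm:minimum-A-augmentation} certifies that dimension $q_0$ is the minimum.

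\emph{Converse.} Let $\mathcal L=\begin{pmatrix}L\\ R\end{pmatrix}$ satisfy $(A)$ with minimum functional dimension, so by Theorem~\ref{thm:minimum-A-augmentation} $\rank(\mathcal LN_C)=q_0=r+k$. Write $\mathcal LN_CT=\begin{pmatrix}I_r&0\\ X_1&X_2\end{pmatrix}$. Subtracting $X_1$ times the first block row, which is a row operation, gives $\begin{pmatrix}I_r&0\\0&X_2\end{pmatrix}$, and the rank count forces $\rank X_2=k$. Any redundant rows of $R$ can be discarded without changing the row space. By the converse part of the parameterization argument in Subsection~\ref{subsec:parameterization-additional-directions}, there are $J$ and $\Theta$ with $\row(X_2)=\row(Y_J(\Theta))$, and hence $\mathcal LN_CT$ is row-equivalent to $W_J(\Theta)$.

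The step I expect to need the most care is showing that $\Theta\in\Omega_{A,J}^\star$. The point is that the reduced form of $(A)$ depends only on row spaces: if $\row(W)=\row(W')$, then $\row(W\bar A_T)=\row(W'\bar A_T)$, and the right-hand sides $\row\begin{pmatrix}W\\ B_{A,T}\end{pmatrix}$ and $\row\begin{pmatrix}W'\\ B_{A,T}\end{pmatrix}$ also coincide. So the inclusion, equivalently the rank equality \eqref{eq:family-rank-test}, holds for $W_J(\Theta)$ exactly when it holds for $\mathcal LN_CT$, and the latter follows from $(A)$ via Lemma~\ref{lem:reduced-condition-A}. This gives $\Theta\in\Omega_{A,J}^\star$. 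The final ``consequently'' statement is just the conjunction of the two directions.
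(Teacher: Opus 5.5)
Your proposal is correct and follows essentially the same route as the paper. The forward direction reads off $\rank W_J(\Theta)=q_0$, and condition $(A)$ from the defining rank equality of $\Omega_{A,J}^\star$. The converse row-reduces the reduced rows to $\begin{pmatrix}I_r&0\\0&Y\end{pmatrix}$, picks a pivot chart $(J,\Theta)$, and uses that the reduced form of $(A)$ depends only on the row space. You add two correct details the paper leaves implicit: transporting Lemma~\ref{lem:reduced-condition-A} through $T$, and realizing $W_J(\Theta)$ as an actual augmentation via $M_C$, which the paper postpones to the recovery subsection.
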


\begin{proof}
	First, fix any
	\[
	J\subseteq\{1,\ldots,s\},
	\qquad |J|=k,
	\]
	and any
	\[
	\Theta\in\Omega_{A,J}^\star.
	\]
	By construction,
	\[
	\rank\bigl(Y_J(\Theta)\bigr)=k.
	\]
	Therefore,
	\[
	\rank\bigl(W_J(\Theta)\bigr)
	=
	r+k
	=
	q_0.
	\]
	Hence \(W_J(\Theta)\) is a reduced augmentation of minimum
	functional dimension.
	Moreover, since
	\[
	\Theta\in\Omega_{A,J}^\star,
	\]
	the defining rank condition for \(\Omega_{A,J}^\star\) gives
	\[
	\rank
	\begin{pmatrix}
		W_J(\Theta)\\
		B_{A,T}\\
		W_J(\Theta)\bar A_T
	\end{pmatrix}
	=
	\rank
	\begin{pmatrix}
		W_J(\Theta)\\
		B_{A,T}
	\end{pmatrix}.
	\]
	Thus \(W_J(\Theta)\) satisfies condition~\((A)\). Consequently,
	every matrix in the stated family is a reduced augmentation of
	minimum functional dimension satisfying condition~\((A)\).
	
	Conversely, consider any reduced augmentation of minimum functional
	dimension satisfying condition~\((A)\). After multiplication by
	\(N_C\) and application of the coordinate transformation \(T\),
	its reduced rows have rank
	\[
	q_0=r+k
	\]
	and contain the prescribed rows
$
	\begin{pmatrix}
		I_r&0
	\end{pmatrix}.
$
	By elementary row operations, its reduced row space can therefore
	be represented in the form
	\[
	\begin{pmatrix}
		I_r&0\\
		0&Y
	\end{pmatrix},
	\]
	where
$
	Y\in\mathbb R^{k\times s},
	\quad
	\rank(Y)=k.
$
	
	Since \(Y\) has rank \(k\), it contains at least one set of \(k\)
	linearly independent columns. Let
	\[
	J\subseteq\{1,\ldots,s\},
	\quad |J|=k,
	\]
	denote the indices of any such set. Using the columns indexed by
	\(J\) as pivot columns, the row space of \(Y\) admits a chart
	representation of the form
	\[
	\row(Y)
	=
	\row\bigl(Y_J(\Theta)\bigr)
	\]
	for some
$
	\Theta\in\mathbb R^{k\times(s-k)}.
$
	It follows that the original reduced augmentation is row-equivalent
	to \(W_J(\Theta)\).
	
	Since the original augmentation satisfies condition~\((A)\), and
	condition~\((A)\) depends only on the corresponding reduced row
	space, its representative \(W_J(\Theta)\) also satisfies the
	transformed rank condition
	\[
	\rank
	\begin{pmatrix}
		W_J(\Theta)\\
		B_{A,T}\\
		W_J(\Theta)\bar A_T
	\end{pmatrix}
	=
	\rank
	\begin{pmatrix}
		W_J(\Theta)\\
		B_{A,T}
	\end{pmatrix}.
	\]
	Therefore, by the definition of \(\Omega_{A,J}^\star\),
$
	\Theta\in\Omega_{A,J}^\star.
$
	
	Thus every reduced augmentation of minimum functional dimension
	satisfying condition~\((A)\) is represented, up to row equivalence,
	by at least one matrix \(W_J(\Theta)\) with
	\[
	J\subseteq\{1,\ldots,s\},
	\qquad
	|J|=k,
	\qquad
	\Theta\in\Omega_{A,J}^\star.
	\]
	Together with the first part of the proof, this shows that
	\[
	\left\{
	W_J(\Theta):
	J\subseteq\{1,\ldots,s\},\
	|J|=k,\
	\Theta\in\Omega_{A,J}^\star
	\right\}
	\]
	represents, up to row equivalence, the complete family of reduced
	augmentations of minimum functional dimension satisfying
	condition~\((A)\).
\end{proof}

\subsection{Recovery of the augmentation matrix}
\label{subsec:recovery-minimum-A}

The matrices $W_J(\Theta)$ are expressed in transformed reduced
coordinates. We now recover the corresponding augmentation

$$
\mathcal L
=
\begin{pmatrix}
	L\\
	R
\end{pmatrix}
$$
in the original state coordinates.

For each admissible pair $(J,\Theta)$, returning to the original
reduced coordinates gives

$$
W_J(\Theta)T^{-1}
=
\begin{pmatrix}
	LN_C\\
	X_J(\Theta)
\end{pmatrix},
$$
which defines
$$
X_J(\Theta)\in\mathbb R^{k\times d_C}.
$$
Thus the first $r$ rows are $LN_C$, while $X_J(\Theta)$ contains the
$k$ additional reduced directions. 
Recovering the augmentation therefore amounts to solving
\begin{equation}
	RN_C=X_J(\Theta).
	\label{eq:recovery-RNC}
\end{equation}
Since $N_C$ has full column rank, this equation is solvable for every
$X_J(\Theta)$. Indeed, for any left inverse $M_C$ satisfying
$M_CN_C=I_{d_C}$, a particular solution is
$$
R_J(\Theta)=X_J(\Theta)M_C.
$$
since
$$
R_J(\Theta)N_C
=
X_J(\Theta)M_CN_C
=
X_J(\Theta).
$$
Hence
$$
\mathcal L_J(\Theta)
=
\begin{pmatrix}
	L\\
	X_J(\Theta)M_C
\end{pmatrix}
$$
has precisely the reduced rows represented by $W_J(\Theta)$.

The solution of \eqref{eq:recovery-RNC} is not unique. Since
$CN_C=0$, for any
$
K\in\mathbb R^{k\times p},
$
the matrix
$$
R
=
X_J(\Theta)M_C+KC
$$
satisfies
$$
RN_C
=
X_J(\Theta)M_CN_C+KCN_C
=
X_J(\Theta).
$$
Thus adding rows from the measured row space does not change the
reduced augmentation.

Conversely, let $R$ be any solution of \eqref{eq:recovery-RNC}. Then
$$
\bigl(R-X_J(\Theta)M_C\bigr)N_C=0.
$$
Since the columns of $N_C$ form a basis for $\ker C$ and $C$ has full
row rank,
$$
\{v\in\mathbb R^{1\times n}:vN_C=0\}
=
\row(C).
$$
Therefore
$$
R-X_J(\Theta)M_C=KC
$$
for some $K\in\mathbb R^{k\times p}$. Consequently, all matrix
representatives associated with an admissible pair $(J,\Theta)$ are
exactly
$$
R
=
X_J(\Theta)M_C+KC,
\qquad
K\in\mathbb R^{k\times p}.
$$
Hence the complete family of minimum augmented functional matrices
satisfying condition $(A)$ is represented by
\begin{equation}
	\mathcal L
	=
	\begin{pmatrix}
		L\\
		X_J(\Theta)M_C+KC
	\end{pmatrix},
	\label{eq:mathcalL}
\end{equation}
where
$$
J\subseteq\{1,\ldots,s\},
\qquad
|J|=k,
\qquad
\Theta\in\Omega_{A,J}^\star,
\qquad
K\in\mathbb R^{k\times p}.
$$

\begin{remark}
	The minimum condition-$(A)$ family is nonempty. Indeed, the chain construction of Section~\ref{sec:augmentation-construction} explicitly produces a
	minimum augmentation satisfying condition $(A)$. Since the
	parameterization above is complete, there exist at least one pivot
	set $J^\star$ and one parameter value
	$$
	\Theta^\star\in\Omega_{A,J^\star}^\star.
	$$
\end{remark}

\section{Spectral Feasibility at the Minimum Condition-$(A)$ Order}
\label{sec:spectral}

The preceding development characterizes the complete family of minimum
augmentations satisfying condition $(A)$. Therefore, the search for a
minimum functional observer can be restricted to matrices of the form
\eqref{eq:mathcalL}. It remains to determine which members of this
family also satisfy condition $(S)$. We now derive a reduced spectral
test for this purpose.

Set
\begin{equation}
	d_{CA}
	:=
	n-
	\rank
	\begin{pmatrix}
		C\\
		CA
	\end{pmatrix}, \nonumber 
\end{equation}
and choose
\begin{equation}
	N_{CA}\in\mathbb R^{n\times d_{CA}},
	\quad
	\operatorname{im}N_{CA}
	=
	\ker
	\begin{pmatrix}
		C\\
		CA
	\end{pmatrix}. \nonumber
\end{equation}
Thus
$$
CN_{CA}=0,
\quad
CAN_{CA}=0.
$$
For any minimum condition-$(A)$ augmentation $\mathcal L$, define
\begin{equation}
	H:=\mathcal LN_{CA},
	\quad
	F:=\mathcal LAN_{CA},
	\quad
	P_{\mathcal L}(\lambda):=\lambda H-F. \nonumber
\end{equation}

The following rank identity removes the directions already contained in
$\operatorname{row}(C,CA)$.

\begin{lemma}[Rank reduction]
	\label{lem:CA-rank-reduction}
	For any matrix $M$ with $n$ columns,
	\begin{equation}
		\rank
		\begin{pmatrix}
			M\\
			C\\
			CA
		\end{pmatrix}
		=
		\rank
		\begin{pmatrix}
			C\\
			CA
		\end{pmatrix}
		+
		\rank(MN_{CA}). 
		\label{eq:rankreduction}
	\end{equation}
\end{lemma}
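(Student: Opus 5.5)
We prove \eqref{eq:rankreduction} by a dimension count on row spaces, following the same pattern as the proof of Lemma~\ref{lem:reduced-condition-A} with $N_{CA}$ in place of $N_C$. Write
\[
Q:=\begin{pmatrix}C\\ CA\end{pmatrix},\qquad \rho:=\rank(Q),\qquad d_{CA}=n-\rho .
\]
The first step is to note that the columns of $N_{CA}$ form a basis of $\ker Q$. This gives the annihilator characterization
\[
\{v\in\mathbb R^{1\times n}: vN_{CA}=0\}=\row(Q).
\]
The inclusion $\supseteq$ is immediate from $CN_{CA}=0$ and $CAN_{CA}=0$. For $\subseteq$, the left annihilator of $\ker Q$ is $(\ker Q)^\perp=\row(Q)$. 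Note that $Q$ need not have full row rank; this is harmless because only row spaces enter the argument.

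Next, consider the linear map
\[
\pi:\row\begin{pmatrix}M\\ Q\end{pmatrix}\to\mathbb R^{1\times d_{CA}},\qquad v\mapsto vN_{CA}.
\]
By the characterization above, its kernel is
\[
\row\begin{pmatrix}M\\ Q\end{pmatrix}\cap\row(Q)=\row(Q),
\]
which has dimension $\rho$. Its image is the span of the images of the generators. The rows of $Q$ map to zero, so the image equals $\row(MN_{CA})$, which has dimension $\rank(MN_{CA})$. Rank–nullity then gives
\[
\rank\begin{pmatrix}M\\ C\\ CA\end{pmatrix}=\rho+\rank(MN_{CA}).
\]
Reordering the block rows $C$ and $CA$ does not change the rank, so this is exactly \eqref{eq:rankreduction}.

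An alternative matrix-level route would complete $N_{CA}$ to an invertible matrix $(N_{CA}\;\,Z)$ and multiply on the right. This yields a block matrix whose $Q$-part is $(0\;\,QZ)$ with $QZ$ of full column rank $\rho$; column elimination then decouples the two blocks. The rank–nullity argument is cleaner, so I will use it.

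The only step needing care is the annihilator identity when $Q$ is rank deficient. It follows because $\dim\row(Q)=\rho=n-d_{CA}$ equals the dimension of the left annihilator of the $d_{CA}$ independent columns of $N_{CA}$. Together with the evident inclusion, the two spaces coincide. I expect no real obstacle beyond this.
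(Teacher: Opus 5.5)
Your proof is correct and is essentially the paper's argument. Both proofs rest on the identity $\{v : vN_{CA}=0\}=\row\begin{pmatrix}C\\ CA\end{pmatrix}$, and your rank--nullity step for $v\mapsto vN_{CA}$ on $\row\begin{pmatrix}M\\ C\\ CA\end{pmatrix}$ simply makes rigorous the paper's informal count of ``directions contributed by $M$ beyond this row space.'' The lemma is later applied with $M=\lambda\mathcal L-\mathcal LA$ for complex $\lambda$, and your dimension count carries over verbatim to $\mathbb C$, because the real columns of $N_{CA}$ remain independent over $\mathbb C$.
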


\begin{proof}
	Since the columns of $N_{CA}$ form a basis of
	$$
	\ker
	\begin{pmatrix}
		C\\
		CA
	\end{pmatrix},
	$$
	a row vector $v$ satisfies
	$$
	vN_{CA}=0
	$$
	if and only if
	$$
	v\in
	\operatorname{row}
	\begin{pmatrix}
		C\\
		CA
	\end{pmatrix}.
	$$
	Hence multiplication by $N_{CA}$ removes precisely the row directions
	already contained in
	$$
	\operatorname{row}
	\begin{pmatrix}
		C\\
		CA
	\end{pmatrix}.
	$$
	Therefore the number of independent directions contributed by $M$
	beyond this row space is $\rank(MN_{CA})$, which gives
	\eqref{eq:rankreduction}.
\end{proof}

\begin{thm}[Reduced form of condition $(S)$]
	\label{thm:reduced-condition-S}
	Let $\mathcal L$ be a minimum augmentation satisfying condition $(A)$.
	Then condition $(S)$ is equivalent to
	\begin{equation}
			\rank P_{\mathcal L}(\lambda)
			=
			\rank H,
			\qquad
			\forall\lambda\in\mathbb C.
		\label{eq:reducedS}
	\end{equation}
\end{thm}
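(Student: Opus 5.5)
Apply Lemma~\ref{lem:CA-rank-reduction} to both sides of condition~$(S)$, choosing first $M=\lambda\mathcal L-\mathcal LA$ and then $M=\mathcal L$. Since $\rank\begin{pmatrix}M\\CA\\C\end{pmatrix}$ is unaffected by the order of the rows, the left-hand side of $(S)$ becomes
\[
\rank\begin{pmatrix}C\\CA\end{pmatrix}+\rank\bigl((\lambda\mathcal L-\mathcal LA)N_{CA}\bigr),
\]
and the right-hand side becomes
\[
\rank\begin{pmatrix}C\\CA\end{pmatrix}+\rank(\mathcal LN_{CA}).
\]
The matrix $N_{CA}$ does not depend on $\lambda$, so $(\lambda\mathcal L-\mathcal LA)N_{CA}=\lambda\mathcal LN_{CA}-\mathcal LAN_{CA}=\lambda H-F=P_{\mathcal L}(\lambda)$. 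Cancelling the common term $\rank\begin{pmatrix}C\\CA\end{pmatrix}$ shows that, for each fixed $\lambda\in\mathbb C$, the equality in $(S)$ holds exactly when $\rank P_{\mathcal L}(\lambda)=\rank H$. Taking this over all $\lambda$ gives \eqref{eq:reducedS}.

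Lemma~\ref{lem:CA-rank-reduction} is stated over the reals, while $\lambda$ is complex, so one point needs checking. The kernel of $\begin{pmatrix}C\\CA\end{pmatrix}$ over $\mathbb C$ is the complexification of the real kernel, and the real columns of $N_{CA}$ still form a basis of it. Hence for a complex row vector $v$ we still have $vN_{CA}=0$ if and only if $v$ lies in the complex row span of $\begin{pmatrix}C\\CA\end{pmatrix}$. The lemma's argument (dimension of a row space modulo a subspace) therefore applies verbatim over $\mathbb C$. I would state this as a one-line remark.

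The hypothesis that $\mathcal L$ is a minimum condition-$(A)$ augmentation is not actually used; the equivalence holds for any $\mathcal L$. I would note that it is included only because of the context. If any step needs care, it is this field-extension remark. The rest is a direct substitution.
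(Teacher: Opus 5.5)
Your proof is correct and matches the paper's argument: you apply Lemma~\ref{lem:CA-rank-reduction} with $M=\lambda\mathcal L-\mathcal LA$ and with $M=\mathcal L$, identify $(\lambda\mathcal L-\mathcal LA)N_{CA}=P_{\mathcal L}(\lambda)$, and cancel the common term $\rank\begin{pmatrix}C\\CA\end{pmatrix}$. Two of your remarks are accurate and go slightly beyond what the paper states explicitly: the lemma carries over to complex $\lambda$ because the real columns of $N_{CA}$ also form a basis of the complex kernel, and the minimum condition-$(A)$ hypothesis is never actually used.
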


\begin{proof}
	Applying Lemma~\ref{lem:CA-rank-reduction} with
	$$
	M=\lambda\mathcal L-\mathcal LA
	$$
	gives
	$$
	\rank
	\begin{pmatrix}
		\lambda\mathcal L-\mathcal LA\\
		CA\\
		C
	\end{pmatrix}
	=
	\rank
	\begin{pmatrix}
		C\\
		CA
	\end{pmatrix}
	+
	\rank\bigl(
	(\lambda\mathcal L-\mathcal LA)N_{CA}
	\bigr).
	$$
	By definition,
	$$
	(\lambda\mathcal L-\mathcal LA)N_{CA}
	=
	\lambda H-F
	=
	P_{\mathcal L}(\lambda).
	$$
	Hence
	$$
	\rank
	\begin{pmatrix}
		\lambda\mathcal L-\mathcal LA\\
		CA\\
		C
	\end{pmatrix}
	=
	\rank
	\begin{pmatrix}
		C\\
		CA
	\end{pmatrix}
	+
	\rank P_{\mathcal L}(\lambda).
	$$
	
	Applying the same lemma with $M=\mathcal L$ gives
	$$
	\rank
	\begin{pmatrix}
		\mathcal L\\
		CA\\
		C
	\end{pmatrix}
	=
	\rank
	\begin{pmatrix}
		C\\
		CA
	\end{pmatrix}
	+
	\rank H.
	$$
	Therefore condition $(S)$ is equivalent to
	$$
	\rank P_{\mathcal L}(\lambda)
	=
	\rank H,
	\qquad
	\forall\lambda\in\mathbb C.
	$$
\end{proof}

Condition $(A)$ provides an additional useful structure. Since
$$
\operatorname{row}(\mathcal LA)
\subseteq
\operatorname{row}
\begin{pmatrix}
	\mathcal L\\
	CA\\
	C
\end{pmatrix},
$$
there exist matrices $T_{\mathcal L}$, $G_0$, and $G_1$ such that
$$
\mathcal LA
=
T_{\mathcal L}\mathcal L+G_0C+G_1CA.
$$
Multiplying by $N_{CA}$ and using
$$
CN_{CA}=0,
\quad
CAN_{CA}=0,
$$
gives
$$
F=T_{\mathcal L}H.
$$
Consequently,
\begin{equation}
	P_{\mathcal L}(\lambda)
	=
	(\lambda I-T_{\mathcal L})H. \nonumber
\end{equation}
If
$$
h:=\rank(H),
$$
then $\rank P_{\mathcal L}(\lambda)\leq h$ for every $\lambda$.
Moreover, $\lambda I-T_{\mathcal L}$ is nonsingular for all but
finitely many $\lambda$, for which
$$
\rank P_{\mathcal L}(\lambda)=h.
$$
Thus condition $(S)$ requires
$$
\rank P_{\mathcal L}(\lambda)=h,
\qquad
\forall\lambda\in\mathbb C.
$$

\subsection{Finite algebraic spectral test}
\label{subsec:finite-spectral-test}

The reduced spectral condition \eqref{eq:reducedS} can be tested
algebraically for each member of the minimum condition-$(A)$ family.

Let
$$
h:=\rank H.
$$
If $h=0$, then $H=0$. Since condition $(A)$ gives
$$
F=T_{\mathcal L}H,
$$
it follows that $F=0$. Hence
$$
P_{\mathcal L}(\lambda)=0,
\qquad
\forall\lambda\in\mathbb C,
$$
and therefore
$$
\rank P_{\mathcal L}(\lambda)
=
0
=
\rank H,
\qquad
\forall\lambda\in\mathbb C.
$$
Thus condition $(S)$ is automatically satisfied.

Suppose now that $h>0$, and let
$$
\Delta_1(\lambda),\ldots,\Delta_{N_h}(\lambda)
$$
denote all $h\times h$ minors of $P_{\mathcal L}(\lambda)$, where
$N_h$ denotes the number of such minors. Since
$$
P_{\mathcal L}(\lambda)
=
(\lambda I-T_{\mathcal L})H,
$$
we have
$$
\rank P_{\mathcal L}(\lambda)
=
h
$$
whenever $\lambda I-T_{\mathcal L}$ is nonsingular. Condition $(S)$
therefore holds if and only if
$
\rank P_{\mathcal L}(\lambda)
=
h,
\qquad
\forall\lambda\in\mathbb C.
$
Equivalently, the maximal minors
$$
\Delta_1(\lambda),\ldots,\Delta_{N_h}(\lambda)
$$

have no common zero in $\mathbb C$. Since these are univariate
polynomials, this is equivalent to requiring that they have no
nonconstant common polynomial factor, or equivalently,
\begin{equation}
	\gcd_{\lambda}
	\bigl(
	\Delta_1(\lambda),\ldots,\Delta_{N_h}(\lambda)
	\bigr)
	=
	1,
	\label{eq:gcdtest}
\end{equation}
where $\gcd_{\lambda}$ denotes the greatest common divisor with respect
to the variable $\lambda$, normalized up to multiplication by a
nonzero constant.

Combining this algebraic test with the complete parameterization of the
minimum condition-$(A)$ family gives the following existence criterion.

\begin{thm}[Existence at the minimum condition-$(A)$ order]
	\label{thm:minimum-A-order-existence}
	Let
	$$
	q_0
	=
	\sum_{j=1}^r\eta_j,
	\quad
	k
	=
	(q_0-r)
	=
	\sum_{j=1}^r(\eta_j-1),
	\quad
	s
	=
	n-p-r.
	$$
	A functional observer of order $q_0$ exists with arbitrary observer-pole
	assignment if and only if
	$$
	J\subseteq\{1,\ldots,s\},
	\quad
	|J|=k,
	\quad
	\Theta\in\Omega_{A,J}^\star,
	$$
	and
	$$
	K\in\mathbb R^{k\times p}
	$$
	such that
	$$
	\mathcal L
	=
	\begin{pmatrix}
		L\\
		X_J(\Theta)M_C+KC
	\end{pmatrix}
	$$
	satisfies
	\begin{equation}
		\rank P_{\mathcal L}(\lambda)
		=
		\rank H,
		\qquad
		\forall\lambda\in\mathbb C. \nonumber
	\end{equation}
	If $h:=\rank H=0$, this condition is satisfied automatically. If
	$h>0$, it is equivalent to
	$$
	\gcd_{\lambda}
	\bigl(
	\Delta_1(\lambda),\ldots,\Delta_{N_h}(\lambda)
	\bigr)
	=
	1,
	$$
	where $\Delta_1(\lambda),\ldots,\Delta_{N_h}(\lambda)$ are the
	$h\times h$ minors of $P_{\mathcal L}(\lambda)$.
\end{thm}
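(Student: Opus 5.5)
The plan is to assemble the theorem from results already in place, treating the existence of an order-$q_0$ functional observer with arbitrary pole assignment as equivalent, via Definition~\ref{def:FO-augmentation}, to the existence of an augmented matrix $\mathcal L$ with $\rank\mathcal L=q_0$ and $\row(L)\subseteq\row(\mathcal L)$ satisfying both $(A)$ and $(S)$.

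For the \emph{only if} direction, suppose such an observer exists with functional matrix $\widetilde{\mathcal L}$. Since the observer order equals the number of independent functional directions and $(A)$ holds, Theorem~\ref{thm:minimum-A-augmentation} shows that $\widetilde{\mathcal L}$ is a minimum condition-$(A)$ augmentation. Hence, by Theorem~\ref{prop:complete-minimum-A-family} together with the recovery argument leading to \eqref{eq:mathcalL}, its reduced row space coincides with that of some $W_J(\Theta)$ with $\Theta\in\Omega_{A,J}^\star$. Moreover, up to row equivalence, $\widetilde{\mathcal L}$ equals $\begin{pmatrix}L\\ X_J(\Theta)M_C+KC\end{pmatrix}$ for some $K$. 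One subtlety must be checked here: the added rows may be adjusted only by combinations of $L$-rows and $C$-rows. Adding $C$-rows is absorbed into $K$. Adding $L$-rows, or changing the basis of the added rows, is a row operation on $\mathcal L$. I would then verify that both $(A)$ and $(S)$ are invariant under replacing $\mathcal L$ by $Q\mathcal L+EC$ with $Q$ nonsingular. For $(S)$ this holds because $\lambda(EC)-(EC)A=\lambda EC-ECA$ lies in $\row(C,CA)$, so neither side of $(S)$ changes. Hence the representative satisfies $(S)$. Theorem~\ref{thm:reduced-condition-S} then converts $(S)$ into $\rank P_{\mathcal L}(\lambda)=\rank H$ for all $\lambda$.

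For the \emph{if} direction, the chosen $\mathcal L$ has rank $q_0$, contains $L$, satisfies $(A)$ because $\Theta\in\Omega_{A,J}^\star$ (Theorem~\ref{prop:complete-minimum-A-family}, and $(A)$ depends only on the reduced row space by Lemma~\ref{lem:reduced-condition-A}), and satisfies $(S)$ by Theorem~\ref{thm:reduced-condition-S}. Definition~\ref{def:FO-augmentation} and the cited equivalence with observer existence and pole placement then yield an order-$q_0$ observer. Since $(A)$ is necessary, Theorem~\ref{thm:minimum-A-augmentation} also ensures that no smaller order is possible.

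Finally, the minor test follows from the factorization $P_{\mathcal L}(\lambda)=(\lambda I-T_{\mathcal L})H$ derived from $(A)$. If $h=0$, then $H=0$ forces $F=0$, so the rank condition holds trivially. If $h>0$, the rank of $P_{\mathcal L}(\lambda)$ is at most $h$ everywhere and equals $h$ generically. So rank $h$ at a given $\lambda_0$ holds exactly when some $h\times h$ minor is nonzero at $\lambda_0$. The condition for all $\lambda$ is therefore that the minors have no common complex root, which for univariate polynomials is equivalent to \eqref{eq:gcdtest}. I expect the main obstacle to be the invariance and representative step in the \emph{only if} direction, namely justifying that passing from an arbitrary minimal $\widetilde{\mathcal L}$ to the parametrized form preserves $(S)$ and not merely $(A)$.
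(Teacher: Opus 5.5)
Your proposal is correct and follows essentially the same route as the paper. The paper combines the completeness of the $(J,\Theta,K)$ family, Theorem~\ref{thm:reduced-condition-S}, and the maximal-minor argument based on $P_{\mathcal L}(\lambda)=(\lambda I-T_{\mathcal L})H$. Your explicit check that $(A)$ and $(S)$ are invariant under $\mathcal L\mapsto Q\mathcal L+EC$, with $Q$ nonsingular, usefully fills a step the paper leaves implicit. That step is needed where the paper asserts that every minimum condition-$(A)$ augmentation is ``represented'' by some $(J,\Theta,K)$.
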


\begin{proof}
	The preceding section characterizes all minimum augmentations satisfying
	condition $(A)$ by the parameters $(J,\Theta,K)$ appearing above.
	Therefore every augmentation of functional dimension $q_0$
	satisfying condition $(A)$ is represented by at least one such choice
	of $(J,\Theta,K)$.
	
	By Theorem~\ref{thm:reduced-condition-S}, a member of this family also
	satisfies condition $(S)$ if and only if
	$$
	\rank P_{\mathcal L}(\lambda)
	=
	\rank H,
	\qquad
	\forall\lambda\in\mathbb C.
	$$
	Hence a functional observer of order $q_0$ exists if and only
	if at least one choice of $(J,\Theta,K)$ satisfies this condition.
	
	If $h=0$, the condition holds automatically, as shown above. If
	$h>0$, the preceding maximal-minor argument shows that it is equivalent
	to \eqref{eq:gcdtest}.
\end{proof}

\begin{remark}
	Theorem~\ref{thm:minimum-A-order-existence} is stated for arbitrary
	observer-pole assignment. If only asymptotic convergence is required,
	the same result holds with the spectral test restricted to
	\(
	\rank P_{\mathcal L}(\lambda)
	=
	\rank H,
	\,\,
	\forall\lambda\in\mathbb C
	\ \text{with}\ 
	\Re(\lambda)\geq0.
	\)
	Thus an asymptotic functional observer of order $q_0$ exists
	if and only if at least one member of the minimum condition-$(A)$
	family satisfies this restricted spectral condition.
\end{remark}

\subsection{Recovery of the Darouach and Luenberger Conditions}
\label{subsec:classical-special-cases}

Theorem~\ref{thm:minimum-A-order-existence} recovers the Darouach
functional-observer conditions when $L$ requires no augmentation and
the Luenberger observability condition when $L$ spans all state
directions not directly available from the measured output.

\begin{corol}[Recovery of Darouach observer conditions]
	\label{cor:darouach-conditions}
	If $q_0=r$, then $k=0$ and
	Theorem~\ref{thm:minimum-A-order-existence} reduces to
	\(
	\rank
	\begin{pmatrix}
		LA\\ L\\ CA\\ C
	\end{pmatrix}
	=
	\rank
	\begin{pmatrix}
		L\\ CA\\ C
	\end{pmatrix},
	\)
	and
	\(
	\rank
	\begin{pmatrix}
		\lambda L-LA\\ CA\\ C
	\end{pmatrix}
	=
	\rank
	\begin{pmatrix}
		L\\ CA\\ C
	\end{pmatrix},
	\,\,
	\forall\lambda\in\mathbb C,
	\)
	which are precisely the Darouach observer-existence conditions.
\end{corol}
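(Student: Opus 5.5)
The plan is to specialize Theorem~\ref{thm:minimum-A-order-existence} to $k=0$ and then undo the reductions used in that theorem.

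\textbf{Step 1: the family collapses to $\mathcal L=L$.} If $q_0=r$, then $\sum_{j=1}^r(\eta_j-1)=0$, and since each $\eta_j\geq1$ this forces $\eta_j=1$ for every $j$. Hence $k=0$, $J=\varnothing$, and no parameter $\Theta$ or $K$ appears, because $K\in\mathbb R^{0\times p}$ is empty. The family \eqref{eq:mathcalL} therefore consists of the single matrix $\mathcal L=L$. Let $W_\varnothing=\begin{pmatrix}I_r&0\end{pmatrix}$; then $W_\varnothing T^{-1}=LN_C$.

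\textbf{Step 2: membership in the family is condition $(A)$ for $L$.} I would argue directly rather than through the chart. By Theorem~\ref{thm:minimum-A-augmentation}, no augmentation of dimension below $q_0=r$ satisfies $(A)$, and $L$ is the unique candidate of dimension $r$ containing $\row(L)$. The condition $\Omega^\star_{A,\varnothing}\neq\varnothing$ is the rank test \eqref{eq:family-rank-test} with $W=W_\varnothing$. By Lemma~\ref{lem:reduced-condition-A}, undoing $T$ and $N_C$, this test is equivalent to
\[
\row(LA)\subseteq\row\begin{pmatrix}L\\CA\\C\end{pmatrix}.
\]
Since adding rows already in a span does not change rank, this inclusion is exactly the first displayed Darouach rank equality.

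\textbf{Step 3: the spectral condition is $(S)$ for $L$.} With $\mathcal L=L$ fixed, the remaining requirement in Theorem~\ref{thm:minimum-A-order-existence} is $\rank P_L(\lambda)=\rank H$ for all $\lambda$, where $H=LN_{CA}$ and $P_L(\lambda)=(\lambda L-LA)N_{CA}$. By Theorem~\ref{thm:reduced-condition-S}, whose proof is just Lemma~\ref{lem:CA-rank-reduction} applied twice, this is equivalent to
\[
\rank\begin{pmatrix}\lambda L-LA\\CA\\C\end{pmatrix}=\rank\begin{pmatrix}L\\CA\\C\end{pmatrix}\quad\forall\lambda\in\mathbb C,
\]
which is the second Darouach condition. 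The gcd and $h=0$ clauses are only reformulations of this condition, so they need no separate treatment.

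\textbf{Step 4: conclude the equivalence.} Combining Steps 2 and 3, an order-$r$ observer exists if and only if both displayed conditions hold. These are the necessary-and-sufficient conditions of Darouach~\cite{22new}, with order $r=q_0$.

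\textbf{Main obstacle.} The delicate point is the degenerate $k=0$ chart in Step 2, where the objects $J=\varnothing$, empty $\Theta$, $\Omega^\star_{A,\varnothing}$ and the empty $K$ must be read consistently. I would handle this by appealing directly to Lemma~\ref{lem:reduced-condition-A} with $\mathcal L=L$, rather than working through the chart objects, so no special case of the chart needs to be verified.
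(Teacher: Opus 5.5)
Your proposal is correct and is essentially the argument the paper intends; the paper states this corollary without a separate proof. For $k=0$ the minimum condition-$(A)$ family collapses to $\mathcal L=L$, and Lemma~\ref{lem:reduced-condition-A} and Theorem~\ref{thm:reduced-condition-S} turn the reduced tests back into the two displayed Darouach rank conditions.

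As a minor sharpening, under $q_0=r$ the first condition holds automatically (Theorem~\ref{prop:augmentation-satisfies-A} with $R$ empty), and conversely it forces every $\eta_j=1$, so the hypothesis $q_0=r$ is equivalent to the first Darouach condition and Darouach's theorem is recovered without any extra assumption.
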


\begin{corol}[Recovery of Luenberger observer condition]
	\label{cor:luenberger-condition}
	Let $C\in\mathbb R^{p\times n}$ have full row rank and
	$L\in\mathbb R^{(n-p)\times n}$ satisfy
	\begin{equation}
		\rank
		\begin{pmatrix}
			C\\ L
		\end{pmatrix}
		=n.
		\label{eq:CL-nonsingular}
	\end{equation}
	Then $\eta_j=1$, $j=1,\ldots,n-p$, and
	$q_0=n-p$. Moreover,
	Theorem~\ref{thm:minimum-A-order-existence} reduces to
	\begin{equation}
		\rank
		\begin{pmatrix}
			\lambda I_n-A\\ C
		\end{pmatrix}
		=n,
		\qquad
		\forall\lambda\in\mathbb C,
		\label{eq:PBH-observability}
	\end{equation}
	the PBH observability condition for $(A,C)$.
\end{corol}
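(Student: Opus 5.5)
The proof has three steps: compute the indices $\eta_j$, show that the unique minimum family member is $\mathcal L=L$ up to rows of $C$, and reduce the reduced spectral test to PBH.

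\emph{Step 1 (indices).} Since $\rank\begin{pmatrix}C\\ L\end{pmatrix}=n$, the $n$ generating rows $g_1,\ldots,g_{p+r}$ (with $r=n-p$) are all retained in the power-zero block. They already span $\mathbb R^{1\times n}$, so every candidate $g_iA$ of power one is dependent when examined. By the definition of $\alpha_i$, this gives $\eta_j=\alpha_{p+j}=1$ for all $j$. Theorem~\ref{thm:minimum-A-augmentation} then yields $q_0=r=n-p$ and $k=0$. Also $s=n-p-r=0$, so $J=\varnothing$ and there is no parameter $\Theta$ or $K$. The family \eqref{eq:mathcalL} therefore collapses to the single matrix $\mathcal L=L$.

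Condition (A) holds trivially here. Indeed, $\row(L,C)=\mathbb R^{1\times n}$ already contains $\row(LA)$, which is consistent with Theorem~\ref{prop:augmentation-satisfies-A}.

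\emph{Step 2 (spectral condition).} By Theorem~\ref{thm:minimum-A-order-existence}, existence is equivalent to condition (S) for $\mathcal L=L$. I would verify this directly rather than through $P_{\mathcal L}$, since that avoids choosing $N_{CA}$. Let $Q=\begin{pmatrix}C\\ L\end{pmatrix}$, which is nonsingular.

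Compute the rank of the right-hand side of (S). Because $\row\begin{pmatrix}L\\ C\end{pmatrix}=\mathbb R^{1\times n}$, we have $\rank\begin{pmatrix}L\\ CA\\ C\end{pmatrix}=n$.

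Now rewrite the left-hand side. Adding $\lambda$ times the $C$ rows to the $-CA$ rows (signs do not matter) shows
\[
\row\begin{pmatrix}\lambda L-LA\\ CA\\ C\end{pmatrix}
=\row\begin{pmatrix}\lambda L-LA\\ \lambda C-CA\\ C\end{pmatrix}
=\row\begin{pmatrix}Q(\lambda I_n-A)\\ C\end{pmatrix}.
\]
Since $Q$ is nonsingular, the rank of the last matrix equals $\rank\begin{pmatrix}\lambda I_n-A\\ C\end{pmatrix}$. One can see this by left-multiplying the block $\operatorname{diag}(Q^{-1},I_p)$, which preserves rank.

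Condition (S) therefore reads $\rank\begin{pmatrix}\lambda I_n-A\\ C\end{pmatrix}=n$ for all $\lambda\in\mathbb C$, which is \eqref{eq:PBH-observability}.

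\emph{Main obstacle.} The only delicate point is the row-space manipulation $\row(\lambda C-CA,\,C)=\row(CA,\,C)$. It must hold for every $\lambda$, including $\lambda=0$, and it does, since the transformation is invertible for every $\lambda$. A second point to check is that "reduces to" is consistent with the reduced test \eqref{eq:reducedS}. This follows from Theorem~\ref{thm:reduced-condition-S}, which is an equivalence, so verifying (S) directly suffices.
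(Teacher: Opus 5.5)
Your proof is correct. It follows the paper's overall plan: collapse the minimum family to $\mathcal L=L$, check (S) directly, and identify it with PBH using the nonsingularity of $Q=\begin{pmatrix}C\\ L\end{pmatrix}$. The rank identity, however, is obtained by a different mechanism.

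The paper right-multiplies both matrices by $\Phi=Q^{-1}=(\Phi_1\ \Phi_2)$, which brings them to the block form $\begin{pmatrix}X&Y\\ I_p&0\end{pmatrix}$. It then eliminates $X$ against the $I_p$ block and compares $Y_1=(\lambda I_n-A)\Phi_2$ with $Y_2$, by left-multiplying $Y_1$ by $Q$.

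You skip the coordinate change. The row operation $CA\mapsto\lambda C-CA$, invertible for every $\lambda$, shows directly that the (S) matrix is row-equivalent to $\begin{pmatrix}Q(\lambda I_n-A)\\ C\end{pmatrix}$, and left-multiplying by $\operatorname{diag}(Q^{-1},I_p)$ finishes. The two mechanisms are dual: the paper removes the $\lambda C$ contribution through $C\Phi_2=0$, while you absorb it with the $C$ rows. Yours uses only left row operations and is shorter.

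You also read $\eta_j=1$ straight off the retention procedure: all $p+r=n$ generators are retained at power zero, so every power-one candidate is dependent. This is more explicit than the paper's one-line inference from (A) holding automatically for $\mathcal L=L$. Finally, you correctly note $k=s=0$, so the parameterized family reduces exactly to $\mathcal L=L$, with no $J$, $\Theta$, or $K$ left to choose.
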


\begin{proof}
	Since $C,L$ together span $\mathbb R^{1\times n}$, condition $(A)$
	is automatic, giving $\eta_j=1$ and $q_0=n-p$; condition
	$(S)$ reduces to $\rank\Big(L(\lambda I_n-A);CA;C\Big)=n$ for all
	$\lambda$. We show this equals $\rank(\lambda I_n-A;C)$.
	
	Let $\Phi:=\begin{pmatrix}C\\L\end{pmatrix}^{-1}=(\Phi_1\ \,\Phi_2)$,
	partitioned conformably with $C,L$, so that
	$C\Phi_1=I_p$, $C\Phi_2=\mathbf0$, $L\Phi_1=\mathbf0$,
	$L\Phi_2=I_{n-p}$. Right-multiplying by $\Phi$ preserves rank and
	gives both matrices the common form
	$\begin{pmatrix}X&Y\\I_p&\mathbf0\end{pmatrix}$, with
	\[
	Y=Y_1:=(\lambda I_n-A)\Phi_2
	\quad\text{for }(\lambda I_n-A;C),
\]
\[
	Y=Y_2:=
	\begin{pmatrix}
		\lambda I_{n-p}-LA\Phi_2\\CA\Phi_2
	\end{pmatrix}
	\quad\text{for }\big((\lambda I_n-A);CA;C)\big).
	\]
	Since
	$\begin{pmatrix}I_n&-X\\ \mathbf0&I_p\end{pmatrix}
	\begin{pmatrix}X&Y\\I_p&\mathbf0\end{pmatrix}
	=\begin{pmatrix}\mathbf0&Y\\I_p&\mathbf0\end{pmatrix}$
	has rank $p+\rank(Y)$ for any $Y$, setting $Y=Y_1$ gives
	$\rank(\lambda I_n-A;C)=p+\rank(Y_1)$, and setting $Y=Y_2$ gives
	$\rank\Big(L(\lambda I_n-A);CA;C\Big)=p+\rank(Y_2)$.
	
	It remains to show $\rank(Y_1)=\rank(Y_2)$ for every $\lambda$.
	Left-multiplying $Y_1$ by the nonsingular
	$\begin{pmatrix}C\\L\end{pmatrix}$ gives
	$\begin{pmatrix}-CA\Phi_2\\ \lambda I_{n-p}-LA\Phi_2\end{pmatrix}$,
	which is $Y_2$ with its two row blocks swapped and the first
	negated. Since row-block swaps and sign changes do not affect
	rank, $\rank\bigl(\begin{pmatrix}C\\L\end{pmatrix}Y_1\bigr)
	=\rank(Y_2)$; and since
	$\begin{pmatrix}C\\L\end{pmatrix}$ is nonsingular,
	$\rank\bigl(\begin{pmatrix}C\\L\end{pmatrix}Y_1\bigr)=\rank(Y_1)$.
	Hence $\rank(Y_1)=\rank(Y_2)$, the two original ranks coincide for
	every $\lambda$, and \eqref{eq:PBH-observability} follows.
\end{proof}

\begin{remark}[Constructive decision procedure]
	The preceding results yield a finite procedure. First, apply the
	global rank-retention rule to determine
	$
	q_0=\sum_{j=1}^r\eta_j.
	$
	Construct the complete minimum condition-\((A)\) family and recover
	the corresponding augmented functional matrices as in
	Sections~\ref{subsec:completeness-minimum-A}
	and~\ref{subsec:recovery-minimum-A}. For each admissible
	representative, form
	\[
	H=\mathcal L_J(\Theta)N_{CA},
	\qquad
	F=\mathcal L_J(\Theta)AN_{CA},
	\]
	and test
	\[
	\rank(\lambda H-F)=\rank H,
	\qquad \forall\lambda\in\mathbb C.
	\]
	An observer of order \(q_0\) exists if and only if this test
	holds for at least one admissible pair \((J,\Theta)\). Once such an augmentation is found, the observer parameters can be
	constructed using the procedure in~\cite{22new}, with the functional
	matrix \(L\) therein replaced by \(\mathcal L_J(\Theta)\).
\end{remark}

\section{Example}
\label{sec:examples}

Consider the system \eqref{eq:1} with \(n=8\), \(p=r=2\), and
{\small
	\[
	A=
	\begin{pmatrix}
		-14&3&-4&5&-1&-1&1&0\\
		-3&5&3&-2&-6&0&8&0\\
		22&-17&1&-9&8&3&-11&0\\
		9&-11&-2&-6&7&1&-5&0\\
		21&-2&12&-14&-6&3&2&0\\
		0&1&-1&0&0&-5&3&0\\
		10&-9&1&-3&5&1&-10&0\\
		0&0&0&0&0&0&0&-1
	\end{pmatrix}, \]
	\[B=\begin{pmatrix}
		e_1 &e_2+e_8
	\end{pmatrix},
	\]
	\[
	C=
	\begin{pmatrix}
		2&-3&0&0&2&0&-4&0\\
		-1&-10&-3&2&5&0&-10&0
	\end{pmatrix},
	\]
	\[
	L=
	\begin{pmatrix}
		-2&3&1&0&-2&0&1&0\\
		-7&8&3&0&-6&0&2&0
	\end{pmatrix},
	\]}
where \(e_i\) is the \(i\)-th standard basis vector of \(\mathbb R^8\).
The eighth state is decoupled from, and unobserved by, the remaining
seven, so that \((A,C)\) is unobservable while the first seven rows and
columns reproduce the finite-dimensional subsystem examined below.

The functional observability indices are
$
(\eta_1,\eta_2)=(3,1),
$
and hence
\[
q_0=\eta_1+\eta_2=4.
\]
The conditions for the existence of an order-\(r\) functional observer
in \cite{22new} are not satisfied, and \((A,C)\) is unobservable, with
$
n_0=\rank\mathcal O_C=7<n=8.
$
Thus,
\[
r=2<q_0=4<n_0-p=5<n-p=6,
\]
so the minimum dimension permitted by condition~\((A)\) is strictly
between the functional dimension \(r\) and the functional-observability
endpoint \(n_0-p\), which itself lies strictly below the (here
inapplicable) reduced-order Luenberger dimension \(n-p\).

A family of minimum condition-\((A)\) augmentations is
{\small
	\[
	\mathcal L_{a,b}=
	\begin{pmatrix}
		-2&3&1&0&-2&0&1&0\\
		-7&8&3&0&-6&0&2&0\\
		-4a-1&6a+2&2a&0&-4a-1&-a&2a+2&0\\
		2-4b&6b+1&2b&-1&-4b&-b&2b+3&0
	\end{pmatrix},\]}
\[
a,b\in\mathbb R,
\]
with
$
\rank(\mathcal L_{a,b})=4.
$
Every member satisfies condition~\((A)\).

For this family, the maximal minors of the reduced spectral pencil,
up to a common nonzero factor, are
\[
\lambda+5,\quad
-(\lambda+5)^2,\quad
(\lambda+5)^3-a(\lambda+5)-b,\quad
-\bigl(a(\lambda+5)+b\bigr).
\]
Consequently,
\[
\mathcal L_{a,b}\text{ satisfies }(S)
\quad\Longleftrightarrow\quad b\neq0.
\]
In particular, the chain augmentation \(\mathcal L_{0,0}\) satisfies
condition~\((A)\) but fails condition~\((S)\) at \(\lambda=-5\), whereas
\(\mathcal L_{0,1}\) satisfies both conditions. Therefore, by
Theorem~\ref{thm:minimum-A-order-existence},
\[
\nu_{\min}=q_0=4.
\]
Thus, spectral failure of one minimum condition-\((A)\) augmentation
does not preclude another augmentation of the same dimension from
yielding a minimum-order observer.

It is also instructive to compare this result with the constructive
procedure of Darouach and Fernando~\cite{ref11n}. For \(q=4\), the
corresponding case is \(k=1\) and \(\mu=r=2\). For the present
example, condition~(43) of~\cite{ref11n} fails for every choice of
\(N\), so that procedure proceeds to the next admissible order. This
can be certified directly: the vector
\(w=(2,0,-6,-3,-6,0,-2,0)^{\mathsf T}\) satisfies
\[
\begin{pmatrix}
	C\\CA\\CA^2
\end{pmatrix}
w=0,
\qquad
\begin{pmatrix}
	L\\LA
\end{pmatrix}
w=0,
\]
yet \(LA^2w=(1,1)^{\mathsf T}\neq0\); since any row of
\(LA^2-N
\begin{pmatrix}
	L\\LA
\end{pmatrix}
\)
lying in
\(
\row
\begin{pmatrix}
	C\\CA\\CA^2
\end{pmatrix}
\)
would have to annihilate \(w\), no \(N\) can satisfy condition~(43).
The minimum-order characterization developed here complements that
construction by showing, through the complete family of minimum
condition-\((A)\) augmentations, that a spectrally feasible
fourth-order realization nevertheless exists.

To construct such a realization, we apply the functional-observer
design procedure of Darouach~\cite{22new} to
\(\mathcal L=\mathcal L_{0,1}\), with the functional matrix \(L\)
therein replaced by \(\mathcal L\). Choosing the eigenvalues of \(N\) as
$
\sigma(N)=\{-2,-3,-4,-6\},
$
yields
{\small
	\[
	\dot w(t)=
	\begin{pmatrix}
		-5&11&1&0\\
		0&0&1&0\\
		0&-5&-5&1\\
		1&-6&0&-5
	\end{pmatrix}w(t)
	+
	\begin{pmatrix}
		-50&-1\\
		-20&-1\\
		31&0\\
		19&0
	\end{pmatrix}y(t)
	\]
	\[
	+
	\begin{pmatrix}
		20&-30\\
		3&-7\\
		-12&7\\
		-14&25
	\end{pmatrix}u(t),
	\]}
with
{\small
	\[
	\widehat{\mathcal Lx(t)}
	=
	w(t)+
	\begin{pmatrix}
		-11&0\\
		-5&0\\
		5&-1\\
		6&0
	\end{pmatrix}y(t).
	\]}
Since \(N\) is Hurwitz,
$
\widehat{\mathcal Lx(t)}\longrightarrow\mathcal Lx(t).
$
Moreover, since the first two rows of \(\mathcal L\) coincide with
\(L\), the desired functional estimate is
{\small
	\[
	\hat z(t)
	=
	\begin{pmatrix}I_2&0\end{pmatrix}\widehat{\mathcal Lx(t)}
	=
	\begin{pmatrix}w_1(t)\\w_2(t)\end{pmatrix}
	+
	\begin{pmatrix}
		-11&0\\
		-5&0
	\end{pmatrix}y(t).
	\]}
Neither the Darouach nor the Luenberger observer exists here, yet the
minimum-order $q_0$ observer proposed in this paper does, as
constructed explicitly above.

\section{Conclusion}
\label{sec:conclusion}

This paper has developed a general minimum-order functional observer
framework that contains the classical Darouach functional observer and
the reduced-order Luenberger observer as special cases. The minimum
functional dimension permitted by the algebraic condition $(A)$ was
characterized by the functional observability indices, and the complete
family of augmentations attaining this dimension was determined.
Necessary-and-sufficient spectral feasibility conditions were then
established for determining whether the minimum observer order is
$\nu_{\min}=q_0=\sum_{j=1}^{r}\eta_j$; considering the complete
family is essential, since one minimum condition-$(A)$ augmentation may
fail condition $(S)$ while another augmentation of the same dimension
satisfies it.

The general result recovers the Darouach observer and its
necessary-and-sufficient existence conditions when $q_0=r$, and
the reduced-order Luenberger observer and its classical observability
condition when $q_0=n-p$. More generally, when
$r<q_0<n-p$ and spectral feasibility holds, the framework
yields a minimum-order functional observer of intermediate order that
may exist even when neither classical observer is available.

\section*{References}

	\section*{Biography}

\begin{wrapfigure}{l}{1in}
	\includegraphics[width=1in,height=1.25in,clip,keepaspectratio]{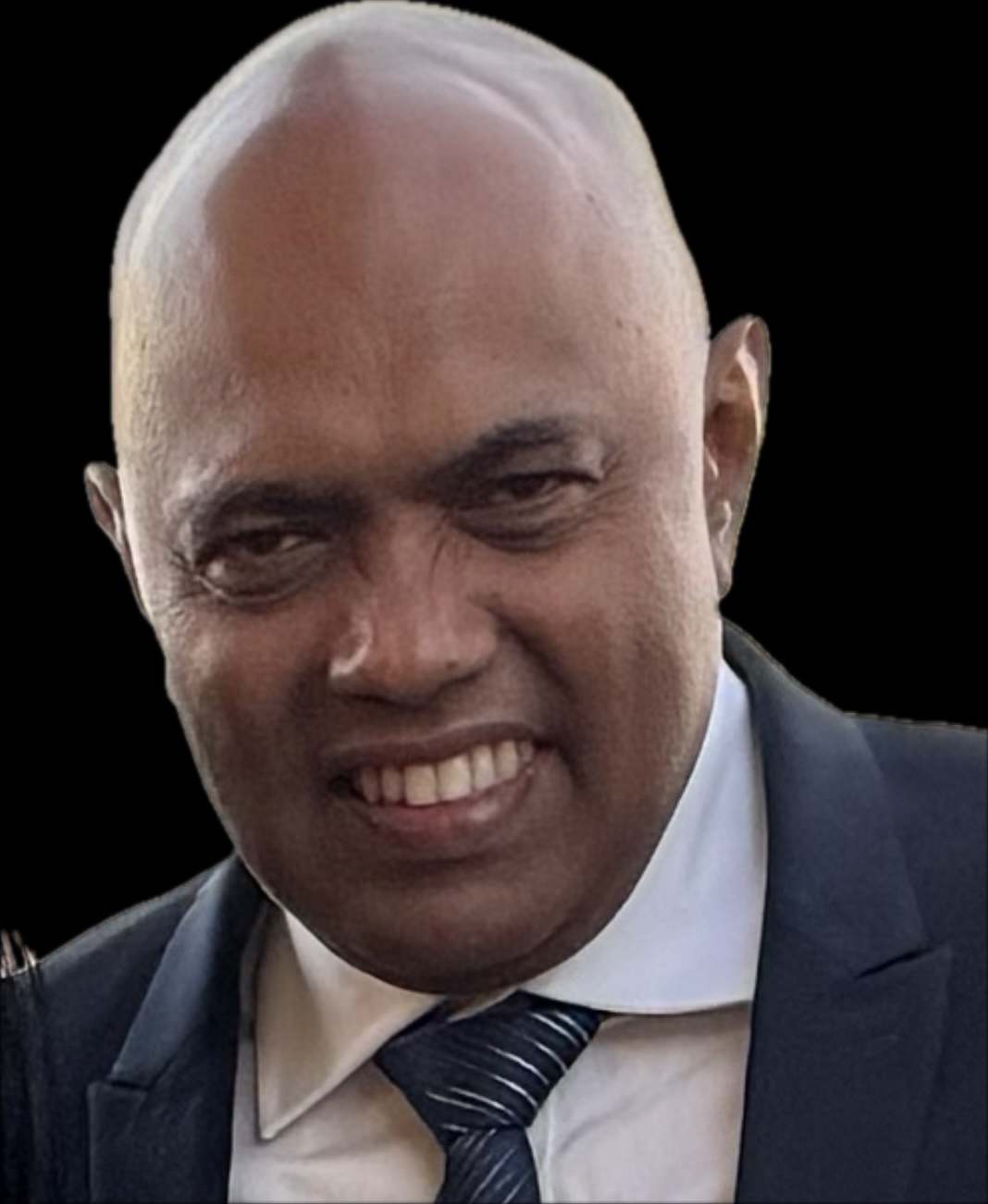}
\end{wrapfigure}
\noindent
\textbf{Tyrone Fernando} received his B.E. (Hons.) and Ph.D. degrees in Electrical Engineering from the University of Melbourne, Victoria, Australia, in 1990 and 1996, respectively. In 1996, he joined the Department of Electrical, Electronic and Computer Engineering, University of Western Australia (UWA), Crawley, WA, Australia, where he currently holds the position of Professor of Electrical Engineering. He previously served as Associate Head and Deputy Head of the department from 2008 to 2010.

Prof. Fernando is currently the Head of the Power and Clean Energy Research Group at UWA. He has provided professional consultancy to Western Power on the integration and management of distributed energy resources in the electric grid. In recognition of his professional contributions, he was named the Outstanding WA IEEE PES/PELS Engineer in 2018. His research interests include theoretical control, observer design, and power system stability and control. He has received multiple teaching awards from UWA in recognition of his contributions to control systems and power systems education.

	\end{document}